\definecolor{darkgreen}{rgb}{0,0.6,0}
\def\SNR{{\rm SNR}}
\begin{document}

\title{\huge{Power Allocation and Pricing in Multi-User Relay Networks Using Stackelberg and Bargaining Games} }

\author{Qian~Cao, H. Vicky~Zhao, and Yindi~Jing\\
ECE Dept., University of Alberta, Edmonton, AB Canada T6G 2V4\\
Email: qcao1@ualberta.ca, vzhao@ece.ualberta.ca, yindi@ualberta.ca}

\maketitle

\begin{abstract}
This paper considers a multi-user single-relay wireless network,
where the relay gets paid for helping the users forward signals, and the users pay to receive the relay service.
We study the relay power allocation and pricing problems, and model
the interaction between the users and the relay as a two-level
Stackelberg game.
In this game, the relay, modeled as the service provider and the
leader of the game, sets the relay price to maximize its revenue;
while the users are modeled as customers and the followers who buy
power from the relay for higher transmission rates. We use a
bargaining game to model the negotiation among users to
achieve a fair allocation of the relay power. Based on the proposed
fair relay power allocation rule, the optimal relay power price that
maximizes the relay revenue is derived analytically. Simulation
shows that the proposed power allocation scheme achieves higher
network sum-rate and relay revenue than the even power allocation.
Furthermore, compared with the sum-rate-optimal solution, simulation
shows that the proposed scheme achieves better fairness with
comparable network sum-rate for a wide range of network scenarios.
The proposed pricing and power allocation solutions are also shown
to be consistent with the laws of supply and demand.

\end{abstract}
{\bf Index terms:} Multi-user relay network, power allocation,
pricing, Stackelberg game, Kailai-Smorodinsky bargaining solution
(KSBS).

\section{Introduction}
\label{sec-intro} Cooperative communication has been shown to be a
promising concept for future wireless networks (e.g.,
\cite{*oview,*promis}). The basic idea is to have multiple
nodes in the network help each other's transmission to achieve diversity. 
Numerous cooperative relaying strategies aiming at optimizing the
global network performance have been proposed. Two widely used ones
are amplify-and-forward (AF) and decode-and-forward (DF). For AF,
the relays simply forward amplified versions of the signals they
receive without decoding. For DF, the relays decode their received
signals and transmit the re-encoded signals \cite{*Laneman}. While
pioneering efforts in cooperative networks focus on single-user
networks (e.g., \cite{*111,*222,*333,*3333,*444,*555}), research on
multi-user networks, in which multiple transmissions of different
users can be supported, are important to meet the demands of future
communication systems. One such model is the multi-user single-relay
network (e.g.,
\cite{*MARN1,*MARN3,*LLB1,*LLB2,*Tairan,*Sha,*Guftaar}). Capacity
bounds of the network are investigated in \cite{*MARN1,*MARN3}, and
interference cancellation schemes are proposed in \cite{*LLB1,
*LLB2}. In \cite{*Tairan} and \cite{*Sha}, network decoding is
applied to combat interference among users in multi-user
single-relay networks. In \cite{*Guftaar}, the resource allocation
problem, including both the subcarrier allocation  and the relay
power allocation, in a multi-user single-relay network is studied to
maximize the sum-rate.

In the aforementioned papers, nodes in a network are assumed to be altruistic and
willing to cooperate to optimize the overall network performance. In
many practical applications, however, nodes are selfish and aim to
optimize their own benefits or quality-of-service.
On the other hand, due to limited resources available, multiple
selfish users in the network have conflicts in resource allocation.
It is important to find solutions that achieve both high overall
network performance and fairness among users. To model and analyze
these behaviors, game theory is often used \cite{*GTWireless}.

Research in cooperative network designs using game theory become
increasingly popular in recent years. Many papers have appeared in
literature, e.g., \cite{*slr,*slr2,*bei,add1,add2,*3,*4,*5}.
\cite{*4} considers the uplink of a network with multiple users and
single base station, in which users can form coalitions and share
resources to form a virtual multi-antenna system to improve
performance. A merge-and-split algorithm is proposed to construct
coalitions among users to maximize utilities, which are defined as
the transmission rates. In \cite{*3} and \cite{*5}, a two-user
network where each user can also work as a relay for the other is
studied. By employing a two-user bargaining game, fair bandwidth
allocation \cite{*3} and power allocation \cite{*5} are found from
the Nash bargaining solution. In \cite{*slr}, the relay power
allocation problem in the downlink of a multi-user multi-relay
cellular network is studied. Non-cooperative game theory is used to
model the competition for relay power among users. An iterative
scheme is proposed to ensure all users reach the Nash equilibrium
point.

None of the aforementioned works consider the cooperation
stimulation problem.
In a relay network, one possible solution to user cooperation
stimulation is the payment-based scheme, where the relays get paid
if they forward users' messages and users pay for the relay service.
Here, relays and users aim to maximize their own payoffs instead of
the overall network performance. In \cite{*bei,*slr2,add1,add2}, the pricing
mechanism is used to encourage the relays to help forward packets
for the users. For multi-user ad hoc networks, compensation
frameworks are proposed in \cite{add1} and \cite{add2}, in which a
network node sets price and receives revenue if it cooperatively
helps others' transmissions. In \cite{*bei}, for a single-user
multi-relay network, the relay selection and relay power control are
investigated using a two-level Stackelberg game. In this game, the
relays compete to provide service to the user to gain revenue.
\cite{*slr2} studies the user power control and relay pricing
problems in a multi-user single-relay network. In the game theoretic
model, the relay sets the price to maximize its revenue, while a
non-cooperative game is used to model the user behavior, in which
each user adjusts its transmit power to selfishly maximize its own
utility. A distributed iterative scheme is proposed to achieve the
unique Nash equilibrium point.

In this paper, we consider a multi-user single-relay network and use
game theory to model and analyze the user and relay behavior.
Amplify-and-forward is adopted due to its simplicity and low
computational load. Pricing mechanism is used where the relay gets
paid for signal forwarding and users pay for the relay service. We
model the interaction between the relay and the users as a two-level
Stackelberg game, in which the relay is the leader and sets the unit
power price for the relay service, and users are the followers where
each user decides how much power to purchase from the relay.
Different from \cite{*slr2}, we consider the relay power allocation
among users, instead of the user power control; and also we model
the relay power competition among users as a cooperative bargaining
game. For the relay power allocation, the Kalai-Smorodinsky
bargaining solution (KSBS) is used for fairness. The power allocation
problem is transformed into  a convex optimization
problem. With the KSBS-based relay power allocation, We
analytically find the optimal relay price that maximizes the relay
revenue. From our simulations, compared with the sum-rate-optimal
power allocation, the proposed KSBS-based power allocation is fairer
and achieves close-to-optimal sum-rate for a wide range of network
scenarios. Compared with the even power allocation, the proposed
KSBS-based power allocation archives higher relay revenue and
network sum-rate. It is also shown via simulations that the proposed
relay pricing and power allocation solutions are consistent with the
laws of supply and demand.


The rest of this paper is organized as follows. Section
\ref{sec-background} gives a brief background review on game theory.
Section \ref{sec-model} describes the multi-user single-relay
cooperative network model and the Stackelberg game and bargaining
game models for the relay pricing and relay power allocation
problems. In Section \ref{sec-solution}, we analyze the relay power
pricing and power allocation problems. The optimal relay price is
solved analytically, while the relay power allocation is transformed
into a convex optimization problem. Section \ref{sec-discussion}
discusses the properties of the proposed solutions and their
possible implementation. Simulation results are shown in Section
\ref{sec-simulation}. Conclusions are drawn in Section
\ref{sec-conclusion}.


\section{Review of Game Theory}
\label{sec-background} In this section, we provide a brief review of
game theory and introduce the basic concepts and results that will
be used in this paper.

Game theory mainly analyzes behavior in strategic situations, or
games, in which players influence each other's decision and
performance \cite{*Owen}. A game consists of three parts: a set of
players, a set of actions of the players, and a set of utilities
that represent the players' relative satisfaction of the outcome of
the game. Equilibrium is the strategy outcome of a game that is the
best response of each user given the decision of others. The most
famous equilibrium is the Nash equilibrium, in which no player can
increase its utility by unilaterally changing its own strategy, and
the corresponding strategy set and utilities constitute a
Nash equilibrium.


\subsection{Stackelberg Game}
\label{sec-backgroundsubB} In a Stackelberg game, one player acts as
a leader who takes action first,
and the other players are followers who observe the leader's action and act accordingly. 
The subgame perfect Nash equilibrium of the Stackelberg game can be
found using the backward induction method. It first studies the
followers' game: for each possible action of the leader, it finds
the optimal followers' response that maximizes the followers'
payoff. Then given the optimal followers' response strategy, it
studies the leader's action and chooses the one that maximizes the
leader's utility. The chosen strategy set is the Stackelberg
equilibrium \cite{*Owen}.

%


\subsection{Bargaining Game and Kailai-Smorodinsky Bargaining Solution (KSBS)}
\label{sec-backgroundsub-KSBS} A bargaining game models the
bargaining interactions of players. In a bargaining problem, there
are $N$ players with utilities $u_1,\ u_2,\ \cdots u_N$. A utility
vector $\mathbf{u}=(u_1\ u_2\ \cdots\ u_N)$ is called feasible if it
is possible to find a strategy set that gives the $i$th player
utility $u_i$ for all $i=1, \cdots, N$. Let $\mathcal {S}$ denote
the set of all feasible utility vectors. The disagreement point,
denoted as $\mathbf{u}_0=(u_{1,0}\ u_{2,0}\ \dots u_{N,0})$, is the
vector of the minimal utility that each player expects if the
players do not reach an agreement and play non-cooperatively. It is
the guaranteed utility for the players in the bargaining game. The
ideal point $\mathbf{u}^I=(u^I_1\ u^I_2\ \dots u^I_N)$ (the
superscript `$I$' stands for ideal) is the vector of the maximum
achievable utilities of the players in $\mathcal {S}$. We thus have
$u^I_i\geq u_{i,0}$. Note that for players with $u^I_i= u_{i,0}$,
cooperation does not increase their utilities and they will not
enter the game. For the rest of the players, $u^I_i> u_{i,0}$ and
they will participate in the bargaining process.

Given $\mathcal {S}$, the disagreement point $\mathbf{u}_0$, and the
ideal point $\mathbf{u}^I$,
players negotiate with each other to select one feasible solution
$\mathbf{u}$ and the corresponding strategy set that all players
agree. Depending on how players define fairness, they may select
different solutions. One popular solution for the bargaining game is
the KSBS \cite{*Bargaining}, which is the solution to the
optimization problem
\begin{eqnarray}
\max{k} \hspace{5mm} \mbox{s.t.}\quad
\frac{u_i-u_{i,0}}{u^I_i-u_{i,0}}=k \label{KSBS}
\end{eqnarray}
for all participating players with $u_i^I > u_{i,0}$.

KSBS is an equilibrium point that guarantees fairness in the sense
of equal penalty, which can be deducted from the constraint in
(\ref{KSBS}). In (\ref{KSBS}), $(u^I_i-u_{i,0})$ and $(u_i-u_{i,0})$
are Player $i$'s maximum and actual net utility gains, respectively.
Taking logarithm on both sides of the constraint in (\ref{KSBS}), we
have
\begin{equation}
    \log{(u^I_i-u_{i,0})}-\log{(u_i-u_{i,0})}=-\log k.
\end{equation}
As $\log k$ is a constant independent of the players, the constraint
in (\ref{KSBS}) forces all players participating in the bargaining
game to suffer the same quality penalty in the logarithmic scale,
and thus ensures fairness in this sense.
It is worth mentioning that KSBS is neither individual utility
optimal nor global optimal in general. It is an equilibrium point
that balances the proposed utility measure and fairness among users.



\section{Game Models for Relay Power Allocation and Pricing}\label{sec-model}

In this section, we explain the network model,  elaborate the relay
power pricing and relay power allocation problems, and propose the
game theoretical models for the problems using a Stackelberg game and
a bargaining game.

\subsection{Network Model}
Consider a wireless network with $N$ users communicating with their
destinations with the help of one relay as shown in Figure 1.
\begin{figure}[!ht]
  \centering
    \psfrag{hh1}{$h_1$}
    \psfrag{hh2}{$h_i$}
    \psfrag{hh3}{$h_N$}
    \psfrag{ff1}{$f_1$}
    \psfrag{ff2}{$f_i$}
    \psfrag{ff3}{$f_N$}
    \psfrag{gg1}{$g_1$}
    \psfrag{gg2}{$g_i$}
    \psfrag{gg3}{$g_N$}
    \psfrag{User 1}{{\small\hspace{-2mm} User 1}}
    \psfrag{User i}{{\small\hspace{-2.5mm} User $i$}}
    \psfrag{User N}{{\small\hspace{-2.3mm} User $N$}}
    \psfrag{Destination 1}{{\small Destination 1}}
    \psfrag{Destination i}{{\small Destination $i$}}
    \psfrag{Destination N}{{\small Destination $N$}}
  \includegraphics[width=.6\textwidth]{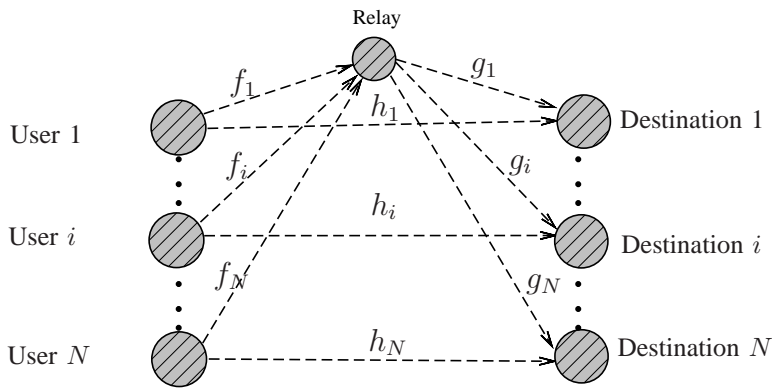}\\
  \label{fig-NWModel}
  \caption{Multi-user single-relay network.}\label{fig:2}
\end{figure}
Denote the channel gain from User $i$ to Destination $i$ (direct
link) as $h_i$, the channel gain from User $i$  to the relay as
$f_i$,  and the channel gain from the relay to Destination $i$ as
$g_i$. The relay and destination are assumed to have global and
perfect channel state information (CSI) $f_1,\cdots,f_N,g_1,\cdots,g_N$ through training and
feedback. No CSI is required at the users. User $i$ uses transmit
power $Q_i$ and the maximum transmit power of the relay is $P$.

Frequency division multiple access (FDMA) is used, so transmissions
of different users are orthogonal and interference-free. Without
loss of generality, we elaborate the transmission of User $i$'s
message on Channel $i$. We use the popular half-duplex two-step AF
relaying protocol. Let $s_i$ be the information symbol of User $i$.
It is normalized as $\mathbf{E}(|s_i|^2)=1$, where $\mathbf{E}$
stands for the average. In the first step, User $i$ transmits
$\sqrt{Q_i}s_i$. The signals received by the relay and Destination
$i$ are
\begin{equation}
    y_{iR}=\sqrt{Q_i}s_if_i+n_{iR}\quad\text{and}\quad y_{iD}=\sqrt{Q_i}s_ih_i+n_{iD},
\end{equation}
respectively, where $n_{iR}$ and $n_{iD}$ are the additive noises at
the relay and the destination in the first step, respectively. In
the second step, the relay amplifies its received signal and
forwards it to Destination $i$. Denote the power the relay uses to
help User $i$ as $P_i$. Since the relay has
perfect CSI, coherent power coefficient is used for better
performance \cite{add3,add4}. The signal received at Destination $i$
can be shown to be
\begin{equation}
    y_{Ri}=\sqrt{\frac{Q_iP_i}{Q_i|f_i|^2+1}}s_if_ig_i+\sqrt{\frac{P_i}{Q_i|f_i|^2+1}}g_in_{iR}+n_{RD},
\end{equation}
where $n_{RD}$ is the additive noise at the destination in the
second step. All noises are assumed to be i.i.d. additive circularly
symmetric complex Gaussian with zero-mean and unit-variance.

After maximum-ratio combining of both the direct path and the relay
path, the effective received signal-to-noise-ratio (SNR) of User
$i$'s transmission can be shown to be
\begin{equation}\label{eq-SNR}
\SNR_{i}=\frac{Q_iP_i|f_ig_i|^2}{P_i|g_i|^2+Q_i|f_i|^2+1}+Q_i|h_i|^2.
\end{equation}If  User $i$'s transmission is not helped by the relay and only the direct transmission is active, the
received SNR of User $i$'s transmission becomes
\begin{equation}
\SNR_{iD}=Q_i|h_i|^2.
\end{equation}

\subsection{Game Theoretical Model for Relay Pricing and Relay Power Allocation }
\label{sec-model-subSgame} In this paper, we focus on  the relay
power allocation among users and the relay power pricing. In early
relay network designs, many research focused on relay power
allocation  that optimizes the global network performance. In these
models, the relay has no gain and selflessly helps the users using
its own power; and the users are also selfless, who care about the
global network performance instead of their own benefits. In many
practical applications, however, the relay needs incentives for
cooperation. Also, there is a natural conflict among users, who want
to  obtain help from the relay to maximize their individual
benefits. With no payment for the relay power, each user wants as
much power as possible, which leads to undesirable situations, e.g.,
the total power demand from users exceeds the relay power budget.
This motivates the use of game theory to model the selfish behavior
of the users and the relay. Our goal is to find a fair power allocation
among the users and the optimal relay pricing strategy. We use the
Stackelberg game to model the interaction between the users and the
relay, and the bargaining game to model the relay power allocation
among the users, which, as explained in Section \ref{sec-background}, is
a natural fit.

We consider the relay as the leader of the Stackelberg game who sets
the price of its power in helping the users. The key point of the
relay game is for the relay to set the price to gain the maximum
revenue. The relay revenue, denoted as $u_R$, is the total payment
from the users. We use a simple pricing model by assuming that the
relay revenue is linear in the amount of power it sells, i.e.,
$u_R=\sum_{i=1}^N{\lambda P_i}$, where $\lambda$ is the normalized
unit price of the relay power and $P_i$ is the power the relay uses
to help User $i$.

We consider the users as followers of the Stackelberg game that
react in a rational way given the unit price of the relay power. The
bargaining game  is used to model the cooperative interaction among
users. That is, we assume that users make agreements to
cooperatively share the relay power. A key point of formulating the
users as selfish players in a bargaining game is to design the
utility function, which should reflect both the quality-of-service
and the payment-for-service of users. Its physical meaning can be
the benefits received by the users. In this paper, we seek to design
an appropriate utility function that is not only physically
meaningful, but also mathematically attractive to ensure
tractability and convergence.

We define the utility of User $i$ as
\begin{equation} \label{eq-userutility}
u_i\triangleq\frac{Q_iP_i|f_ig_i|^2}{P_i|g_i|^2+Q_i|f_i|^2+1}+Q_i|h_i|^2-\lambda
P_i,\quad i=1\cdots N,
\end{equation}
which, for a given network scenario, is a function of $P_i$, the
power the relay uses to help User $i$. The first two terms of
(\ref{eq-userutility}) correspond to the effective received SNR of
User $i$ given in (\ref{eq-SNR}) and represent the
quality-of-service of the user. It is directly related to the
performance of the communication, e.g., the achievable rate. The
last term $\lambda P_i$ represents the user's normalized cost in
purchasing the relay service.
If User $i$ does not buy any power from the relay and uses the
direct transmission only, i.e., $P_i=0$, its utility is the minimum
utility that User $i$ expects. Thus
\begin{equation}
u_{i,0}=Q_i|h_i|^2. \label{u-io}
\end{equation}

In the following section, we analyze the above Stackelberg game and
bargaining game models to find the optimal relay power pricing and a
fair power allocation among the users.

\section{Relay Power Allocation and Pricing Solutions}\label{sec-solution}
In this section, we solve the power allocation and pricing problems
jointly using the backward induction method \cite{*Owen}. That is,
we first solve the user game, i.e., the relay power allocation among
the users for a given price of the relay power, then solve the relay
game, i.e., the optimal price of the relay power, based on the
derived user bargaining strategy. The user game and the relay game
are formulated and analyzed in the following two subsections,
respectively.

\subsection{Relay Power Allocation Based on KSBS}
The user game is to find the relay power allocation among the users for
a given unit power price $\lambda$. We use the bargaining game in
cooperative game theory to find a fair power allocation.
Specifically, we look for the KSBS of the bargaining game, the
background of which is provided in Section
\ref{sec-backgroundsub-KSBS}.

We first calculate User $i$'s ideal utility $u^I_i$ of a given
$\lambda$. To maximize its utility, User $i$'s goal is
\begin{equation}\label{eq-nonusergame}
\max_{P_i}u_i\hspace{4mm} \ \mbox{s.t.}\ \ u_i\ge u_{i,0},\  0\le
P_i\le P .
\end{equation}
The first constraint in (\ref{eq-nonusergame}) ensures that User $i$
gets no less utility than $u_{i,0}$, which is its utility when it
receives no help from the relay, i.e., $P_i=0$. The second
constraint ensures that the power demand of User $i$ does not exceed
the total power budget $P$ of the relay. Given a relay power price,
this optimization problem can be solved analytically and the result
is given in Lemma \ref{lemma1}.
\newtheorem{lemma}{Lemma}

\begin{lemma}\label{lem-idealpower}
Define
\begin{equation}
b_i\triangleq\frac{Q_i|f_ig_i|^2}{Q_i|f_i|^2+1}. \label{def-b}
\end{equation}
Given the unit relay power price $\lambda$, the ideal power demand
of User $i$ that maximizes its utility $u_i$ in
(\ref{eq-userutility}) is
\begin{equation}\label{eq-idealpower}
P^I_i(\lambda)=\left\{
\begin{array}{ll}
0 & \text{if\ \ } \lambda\ge b_i\\
\frac{Q_i|f_i|^2}{\sqrt{b_i}}(\frac{1}{\sqrt{\lambda}}-\frac{1}{\sqrt{b_i}})&
\text{if\ \ } b_i>\lambda> b_i\left(\frac{b_iP}{Q_i|f_i|^2}+1\right)^{-2}\\
P & \text{if\ \ }\lambda\le
b_i\left(\frac{b_iP}{Q_i|f_i|^2}+1\right)^{-2}
\end{array}\right..
\end{equation}
The ideal utility of User $i$ is
\begin{equation}
u^I_i(\lambda)=\left\{
\begin{array}{ll}
u_{i,0} & \text{if }\lambda\ge b_i,\\
Q_i|f_i|^2(1-\sqrt{\lambda/b_i})^2+u_{i,0}& \text{if }b_i>\lambda>b_i\left(\frac{b_iP}{Q_i|f_i|^2}+1\right)^{-2}\\
\frac{b_iP}{(Q_i|f_i|^2)^{-1}b_iP+1}-\lambda P+u_{i,0} & \text{if }
\lambda\le  b_i\left(\frac{b_iP}{Q_i|f_i|^2}+1\right)^{-2}
\end{array}\right..
\label{ideal-u}
\end{equation}
\label{lemma1}
\end{lemma}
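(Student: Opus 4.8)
The plan is to treat (\ref{eq-nonusergame}) as a one-dimensional constrained optimization in $P_i$ and to read off the three cases in (\ref{eq-idealpower}) as the outcomes of clipping the maximizer of a strictly concave function to the interval $[0,P]$. First I would subtract the constant $u_{i,0}=Q_i|h_i|^2$ and work with the net gain
\[
\psi(P_i)\triangleq u_i-u_{i,0}=\frac{Q_iP_i|f_ig_i|^2}{P_i|g_i|^2+Q_i|f_i|^2+1}-\lambda P_i ,
\]
so that the problem becomes $\max_{0\le P_i\le P}\psi(P_i)$ subject to $\psi(P_i)\ge 0$. A quick second-derivative computation shows $\psi''(P_i)<0$ for all $P_i\ge 0$, so $\psi$ is strictly concave; since $\psi(0)=0$, the individual-rationality constraint $u_i\ge u_{i,0}$ (i.e. $\psi\ge 0$) is automatically satisfied at the box-constrained maximizer and never binds. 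Hence I only need to locate the maximizer of a strictly concave function on $[0,P]$.

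Differentiating gives $\psi'(P_i)=\dfrac{Q_i|f_ig_i|^2\,(Q_i|f_i|^2+1)}{(P_i|g_i|^2+Q_i|f_i|^2+1)^2}-\lambda$, which is strictly decreasing in $P_i$. Evaluating at the left endpoint and invoking the definition (\ref{def-b}) yields $\psi'(0)=b_i-\lambda$. This is the pivot: if $\lambda\ge b_i$ then $\psi'\le 0$ on $[0,P]$, $\psi$ is nonincreasing, and the maximizer is $P_i=0$, giving the first line of (\ref{eq-idealpower}). When $\lambda<b_i$ I would solve $\psi'(P_i)=0$ for the unique interior stationary point $P_i^{\star}$ and simplify it using the two identities $Q_i|f_ig_i|^2/|g_i|^2=Q_i|f_i|^2$ and $(Q_i|f_i|^2+1)/|g_i|^2=Q_i|f_i|^2/b_i$, which collapse it to $P_i^{\star}=\frac{Q_i|f_i|^2}{\sqrt{b_i}}\bigl(1/\sqrt{\lambda}-1/\sqrt{b_i}\bigr)$, exactly the middle expression. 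Since $P_i^{\star}$ is strictly decreasing in $\lambda$, comparing it to $P$ splits the remaining range: setting $P_i^{\star}=P$ and solving for $\lambda$ produces the threshold $b_i\bigl(b_iP/(Q_i|f_i|^2)+1\bigr)^{-2}$, so $P_i^{\star}\le P$ (interior optimum) holds above this threshold and $P_i^{\star}>P$ (clip to $P$, $\psi$ still increasing at $P$) holds below it, accounting for the second and third lines.

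Finally I would substitute each maximizer back into $u_i$ to obtain (\ref{ideal-u}). The interior case is the only delicate computation: at $P_i^{\star}$ the stationarity relation gives $P_i^{\star}|g_i|^2+Q_i|f_i|^2+1=(Q_i|f_i|^2+1)\sqrt{b_i/\lambda}$, and feeding this back into the rational term lets the net gain collapse into the perfect square $Q_i|f_i|^2\bigl(1-\sqrt{\lambda/b_i}\bigr)^2$; the boundary case $P_i=P$ reduces directly and matches after rewriting the denominator through $b_i$. I expect the main obstacle to be purely bookkeeping rather than conceptual: repeatedly re-expressing $|g_i|^2$, $Q_i|f_i|^2$, and $Q_i|f_i|^2+1$ in terms of $b_i$ so that the closed forms line up verbatim with (\ref{eq-idealpower}) and (\ref{ideal-u}). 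Conceptually the lemma is simply the statement that the ideal demand is the clipped peak of a strictly concave utility, with the two clipping boundaries set by $\psi'(0)=0$ and $P_i^{\star}=P$.
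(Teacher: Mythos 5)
Your proposal is correct and follows essentially the same route as the paper's proof: establish strict concavity of $u_i$ in $P_i$ via the second derivative, locate the stationary point from the first-order condition, and split into the three cases according to whether that point falls below $0$, inside $[0,P]$, or above $P$, then substitute back to get $u_i^I(\lambda)$. The only cosmetic differences are that you work with the raw channel-gain form of the derivative and convert to $b_i$ afterwards (the paper writes $\partial u_i/\partial P_i$ directly as $b_i(b_iP_i/(Q_i|f_i|^2)+1)^{-2}-\lambda$), and you explicitly observe that the constraint $u_i\ge u_{i,0}$ never binds because $\psi(0)=0$, a point the paper leaves implicit.
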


\begin{proof}
From (\ref{eq-userutility}), we have
\[\frac{\partial{u_i}}{\partial{P_i}}=b_i\left(\frac{b_iP_i}{Q_i|f_i|^2}+1\right)^{-2}-\lambda, \hspace{4mm}
\frac{\partial^2{u_i}}{\partial^2{P_i}}=\frac{-2(Q_i|f_i|^2)^{-1}b^2_i}{[(Q_i|f_i|^2)^{-1}b_iP_i+1]^3}.
\]
Thus $\frac{\partial^2{u_i}}{\partial^2{P_i}}<0$, which means that
$u_i$ is a concave function of $P_i$.

When $\lambda\ge b_i$, $\frac{\partial{u_i}}{\partial{P_i}}\leq 0$
for all $P_i \geq 0$ as $[(Q_i|f_i|^2)^{-1}b_iP_i+1]^2>1$. So $u_i$
is a non-increasing function of $P_i$ and its maximum is reached at
$P^I_i(\lambda)=0$. When $\lambda\le
b_i\left(\frac{b_iP}{Q_i|f_i|^2}+1\right)^{-2}$,
$\frac{\partial{u_i}}{\partial{P_i}}\geq 0$ for all $P_i\leq P$. So
$u_i$ is a non-decreasing function of $P_i$, and $P^I_i(\lambda)=P$
in this case. When $b_i>\lambda>
b_i\left(\frac{b_iP}{Q_i|f_i|^2}+1\right)^{-2}$, $u_i$ reaches its
maximum when $\frac{\partial{u_i}}{\partial{P_i}}=0$, i.e.,
$P_i=\frac{Q_i|f_i|^2}{\sqrt{b_i}}(\frac{1}{\sqrt{\lambda}}-\frac{1}{\sqrt{b_i}})$.
This proves the ideal power solution in (\ref{eq-idealpower}). Using
this solution and the equalities (\ref{eq-userutility}) and
(\ref{u-io}), we can obtain the ideal utility for User $i$ in
(\ref{ideal-u}).
\end{proof}

From Lemma \ref{lem-idealpower}, we see that $P_i^I(\lambda)$ is independent
of User $i$'s direct link $h_i$. Intuitively, this is because the
contribution of the direct link to User $i$'s receive SNR and
utility is fixed and keeps unchanged  for any amount of relay power
that User $i$ obtains.

Lemma \ref{lem-idealpower} also shows that when the price is too
high, Case 1 in (\ref{eq-idealpower}), User $i$ will not buy any
relay service. When the price is too low, Case 3 in
(\ref{eq-idealpower}), User $i$ wants to purchase all relay power to
maximize its utility. For the price range shown in Case 2 in
(\ref{eq-idealpower}), User $i$ asks for part of the relay power
that gives the ideal balance between its SNR and its payment to
maximize its utility. The ideal power demand of User $i$ depends not
only on the relay power price, but also on its power constraint
$Q_i$ and the quality of its local channels $f_i$ and $g_i$. The
$b_i$ defined in (\ref{def-b}), whose value depends on User $i$'s
condition only, is an important parameter. As shown in
(\ref{eq-idealpower}), it not only determines whether a relay asks
for the relay service but also affects how much power a user asks
for ideally. We can see $b_i$ as a quality measure for User $i$ to
some extent. For any two users, User $i$ and User $j$, assume that
$b_i>b_j$. We can see that if User $i$ is not allocated any relay
power, which happens when $b_i\le \lambda$, User $j$ will not be
allocated any relay power either because its $b_j$ is smaller. Also,
for a given price $\lambda$, increasing the $Q_i$ and $|f_i|^2$ of
User $i$, which increases $b_i$, results in higher or the same relay
power demand from User $i$, which is shown in the following lemma.
\begin{lemma}
Given a relay power price $\lambda$, $P_{i}^I(\lambda)$ is a
non-decreasing function of $Q_i$ and $|f_i|^2$. \label{P-b}
\end{lemma}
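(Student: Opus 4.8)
The plan is to collapse the two-variable monotonicity claim into a one-variable one by noticing that $Q_i$ and $|f_i|^2$ enter the whole problem only through their product. Writing $t\triangleq Q_i|f_i|^2$, the quality parameter of (\ref{def-b}) becomes $b_i=|g_i|^2\,t/(t+1)$, and substituting $Q_i|f_i|^2=t$ (together with $b_i$, which is itself a function of $t$) into each branch of (\ref{eq-idealpower}) shows that, for fixed $\lambda$ and $|g_i|^2$, the quantity $P_i^I(\lambda)$ is a function of $t$ alone. Since $t$ is non-decreasing in each of $Q_i$ and $|f_i|^2$ when the other is held fixed, it suffices to prove that $P_i^I(\lambda)$ is non-decreasing in $t$; Lemma \ref{P-b} then follows by composition.

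First I would establish that the three branches of (\ref{eq-idealpower}) are encountered \emph{in order} as $t$ grows. The map $t\mapsto t/(t+1)$ is increasing, so $b_i$ is increasing in $t$; and the lower threshold $T(t)\triangleq b_i\bigl(\tfrac{b_iP}{t}+1\bigr)^{-2}$ is also increasing in $t$, since $b_iP/t=|g_i|^2P/(t+1)$ decreases while $b_i$ increases, so both factors of $T(t)$ increase. Hence Case 1 ($P_i^I=0$) occurs for small $t$, Case 3 ($P_i^I=P$) for large $t$, and Case 2 in between, so the constant boundary values already line up in the non-decreasing direction.

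The core of the argument is to show that the Case-2 expression is increasing in $t$. I would rewrite it purely in terms of $b_i$ using $t=b_i/(|g_i|^2-b_i)$, obtaining $P_i^I=(a\sqrt{b_i}-1)/(|g_i|^2-b_i)$ with $a\triangleq 1/\sqrt{\lambda}$, and then differentiate with respect to $b_i$. The sign of the derivative reduces to the inequality $a(|g_i|^2+b_i)\ge 2\sqrt{b_i}$, which holds because in Case 2 we have $\lambda<b_i$ (so $a\sqrt{b_i}>1$) and $b_i<|g_i|^2$; together these give $a(|g_i|^2+b_i)>2ab_i>2\sqrt{b_i}$. This establishes monotonicity inside Case 2, and combined with the boundary values from the previous step and the continuity of $P_i^I$ across the thresholds (at $b_i=\lambda$ the Case-2 value is $0$, and at the upper threshold it equals $P$), we conclude that $P_i^I$ is non-decreasing in $t$ over its whole range.

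I expect the main obstacle to be the bookkeeping in the last step: verifying the derivative inequality cleanly and checking continuity at the two thresholds, so that the piecewise function is genuinely non-decreasing and not merely non-decreasing on each piece. An attractive alternative that sidesteps the casework entirely is a monotone-comparative-statics argument. One shows that the cross-partial $\partial^2 u_i/\partial P_i\,\partial t=\tfrac{\partial}{\partial t}\bigl[b_i(\tfrac{|g_i|^2P_i}{t+1}+1)^{-2}\bigr]$ is nonnegative, which is immediate because it expands into a sum of two manifestly positive terms (the derivative of $b_i$ is positive and the derivative of the inner bracket is negative, entering with a negative sign). Thus the marginal utility $\partial u_i/\partial P_i$ is increasing in $t$, and a standard comparative-statics result gives at once that the maximizer $P_i^I$ is non-decreasing in $t$, hence in $Q_i$ and $|f_i|^2$.
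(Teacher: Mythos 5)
Your argument is correct, and its first half lands on the same underlying mechanism as the paper's proof: both reduce the two-variable claim to monotonicity in the single product $t=Q_i|f_i|^2$ and exploit the fact that $P_i^I(\lambda)$ is the Case-2 formula of (\ref{eq-idealpower}) clipped to $[0,P]$. The difference is in the execution, and yours is genuinely more careful. The paper collapses the three branches into $P^I_i(\lambda)=\max\bigl[0,\min\bigl(E(t),P\bigr)\bigr]$ with $E(t)=\frac{t+1}{|g_i|^2}\bigl(\sqrt{|g_i|^2 t/(\lambda(t+1))}-1\bigr)$ and simply asserts that $E$ is non-decreasing in $Q_i$ and $|f_i|^2$; that assertion is not quite accurate as stated---when $\lambda\ge|g_i|^2$ the expression $E(t)$ is negative and eventually decreasing in $t$, and the conclusion survives only because the $\max$ with $0$ clips that region. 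Your proof closes exactly this gap: by confining the derivative computation to Case 2, where $\lambda<b_i<|g_i|^2$, the chain $a(|g_i|^2+b_i)>2ab_i>2\sqrt{b_i}$ is legitimate, and your separate verification that the three cases occur in order in $t$ with matching boundary values $0$ and $P$ supplies the continuity argument that the paper's one-line appeal to the monotonicity of $\max$ and $\min$ leaves implicit. Your monotone-comparative-statics alternative is a route the paper does not take and is arguably the cleanest of all: the cross-partial $\partial^2 u_i/\partial P_i\,\partial t\ge 0$ (a product of two positive increasing factors) gives increasing differences of $u_i$ in $(P_i,t)$ over the fixed constraint set $[0,P]$, so the maximizer---unique by the strict concavity already established in Lemma \ref{lem-idealpower}---is non-decreasing in $t$ with no case analysis whatsoever; the price is invoking Topkis-type machinery the paper avoids.
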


\begin{proof}
From (\ref{eq-idealpower}), we get, when $b_i>\lambda>
b_i\left(\frac{b_iP}{Q_i|f_i|^2}+1\right)^{-2}$,
\begin{equation*}
P^I_i(\lambda)=\frac{Q_i|f_i|^2}{b_i}\left(\sqrt{\frac{b_i}{\lambda}}-1\right)
=\frac{Q_i|f_i|^2+1}{|g_i|^2}\left\{\sqrt{\frac{|g_i|^2}{\lambda[1+1/(Q_i|f_i|^2)]}}-1\right\}.
\end{equation*}
For other two price ranges, when $\lambda\ge b_i$
$P^I_i(\lambda)=0$, and when $\lambda\le
b_i\left(\frac{b_iP}{Q_i|f_i|^2}+1\right)^{-2}$, $P^I_i(\lambda)=P$.
So, in all price ranges,
\begin{equation*}
P^I_i(\lambda)=\max\left[0,
\min\left(\frac{Q_i|f_i|^2+1}{|g_i|^2}\left\{\sqrt{\frac{|g_i|^2}{\lambda[1+1/(Q_i|f_i|^2)]}}-1\right\},P\right)\right].
\end{equation*}
$\max$ and $\min$ are
non-decreasing functions. For a given $\lambda$,
$\frac{Q_i|f_i|^2+1}{|g_i|^2}\left\{\sqrt{\frac{|g_i|^2}{\lambda[1+1/(Q_i|f_i|^2)]}}-1\right\}$
is also a non-decreasing function of $Q_i$ and $|f_i|^2$. So we conclude
that $P_{i}^I(\lambda)$ is a non-decreasing function of $Q_i$ and
$|f_i|^2$.
\end{proof}

To find the KSBS of the user bargaining game, without loss of
generality, we assume that the users are sorted in the descending
order of their $b_i$ values, that is
\begin{equation}
b_1\ge b_2\geq\cdots \geq b_N. \label{order-b}
\end{equation}
With the given price $\lambda$, for users satisfying $b_i\leq
\lambda$, as shown in Lemma \ref{lem-idealpower}, their ideal power
demand is 0 so they do not buy any power from the relay, thus do not
enter the game.

Let $L$ be the number of users satisfying $b_i > \lambda$. That is,
with the ordering in (\ref{order-b}), assume that
$b_L>\lambda>b_{L+1}$. The first $L$ users will participate in the
bargaining game and purchase the relay service. Given $\lambda$, to
find the KSBS-based power allocation of the $L$ users is equivalent
to solving the following optimization problem \cite{*Bargaining}:
\begin{equation}
    \max_{P_i}{k} \quad \mbox{s.t.}\quad \frac{\frac{b_iP_i}{(Q_i|f_i|^2)^{-1}b_iP_i+1}-\lambda P_i}{u^I_i-u_{i,0}}=k,
   \quad\sum^L_{i=1} P_i \leq P, \quad 0< P_i< Q_i|f_i|^2\left(\frac{1}{\lambda}-\frac{1}{b_i}\right),
\label{KSBS-prob}
\end{equation}
where $u_{i}^I$ and $u_{i,0}$, given in (\ref{ideal-u}) and
(\ref{u-io}) respectively, are the ideal and minimal utilities of
User $i$. The second constraint in (\ref{KSBS-prob}) is due to the
total power constraint of the relay, and the last constraint is to
ensure the feasibility of the solution and is derived from rewriting
$u_i> u_{i,0}$.

In the proof of Lemma \ref{lem-idealpower}, we have shown that $u_i$
is a concave function of $P_i$. Also, $u_i=u_{i,0}$ at $P_i=0$ and
$P_i=Q_i|f_i|^2\left(1/\lambda-1/b_i\right)$, and $u_i$ reaches its
maximum $u^I_{i}(\lambda)$ at $P_i=P_i^I(\lambda)$.  An example of
$u_i$ as a function of $P_i$ is given in Figure
\ref{fig:ui-concavity}. It can be shown from the utility definition
in (\ref{eq-userutility}) that for each
$u\in(u_{i,0},u^I_{i}(\lambda))$, there are two possible choices of
$P_i$ in the range
$\left(0,Q_i|f_i|^2\left(1/\lambda-1/b_i\right)\right)$ that satisfy
$u_i(P_i)=u$: one in the range $\left(0,P_i^I(\lambda) \right]$ and
the other in the range
$\left[P_i^I(\lambda),Q_i|f_i|^2\left(1/\lambda-1/b_i\right)\right)$.
Thus we can shrink the feasible region of $P_i$ from
$\left(0,Q_i|f_i|^2\left(1/\lambda-1/b_i\right)\right)$ to either
one of the smaller regions.
\begin{figure}[!ht]
  \centering
   \psfrag{high}{$Q_i|f_i|^2\left(\frac{1}{\lambda}-\frac{1}{b_i}\right)$}
\psfrag{midle}{$P_i^I(\lambda)$}
  \psfrag{Ui}{$u_i$}
  \psfrag{Ui0}{$u_{i,0}$}
  \psfrag{UiI}{$u^I_{i}(\lambda)$}
  \psfrag{Pi}{$P_i$}
  \psfrag{0}{$0$}
  \includegraphics[width=.5\textwidth]{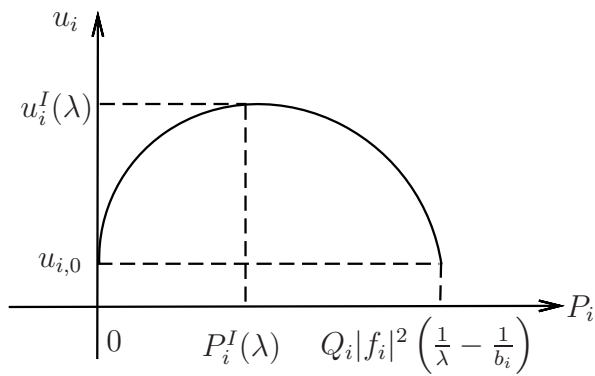}\\
  \caption{Concavity of the utility function.}
  \label{fig:ui-concavity}
\end{figure}
We choose the first region for two reasons. First, for the same
$u_i$ value, this choice results in a smaller $P_i$ than choosing
the second region, and the users prefer to buy less power to gain
the same utility. Second, smaller power consumption for each user
saves relay power, so more users can be helped. Thus,
(\ref{KSBS-prob}) becomes
\begin{eqnarray}
    \max_{P_i}{k}\quad \mbox{s.t.}\quad \frac{\frac{b_iP_i}{(Q_i|f_i|^2)^{-1}b_iP_i+1}-\lambda P_i}{u_i^I-u_{i,0}}= k, \quad \sum^L_{i=1} P_i \leq P,\quad 0< P_i\leq P_i^I(\lambda).
\label{KSBS-prob-equi10}
\end{eqnarray}
To solve this optimization problem, we prove the following lemma.
\begin{lemma}
The relay power allocation problem in (\ref{KSBS-prob-equi10}) is
equivalent to the following max-min problem:
\begin{eqnarray}
\max_{P_i}\min_i\Bigg\{
\frac{\frac{b_iP_i}{(Q_i|f_i|^2)^{-1}b_iP_i+1}-\lambda
P_i}{u_i^I-u_{i,0}}\Bigg\}, \quad \mbox{s.t.}\quad\sum^L_{i=1} P_i
\leq P,\quad 0< P_i\leq P_i^I(\lambda). \label{coop-user1}
   \end{eqnarray}
\label{lemma-equi}
\end{lemma}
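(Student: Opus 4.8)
The plan is to collapse both optimization problems onto a single scalar, monotone equation by exploiting the strict monotonicity of each user's normalized net utility gain. For $i=1,\dots,L$ and $P_i\in\left(0,P_i^I(\lambda)\right]$, write
\begin{equation}
r_i(P_i)\triangleq\frac{\frac{b_iP_i}{(Q_i|f_i|^2)^{-1}b_iP_i+1}-\lambda P_i}{u_i^I-u_{i,0}},
\end{equation}
so the common ratio in (\ref{KSBS-prob-equi10}) is the equation $r_i(P_i)=k$ and the objective in (\ref{coop-user1}) is $\min_i r_i(P_i)$. Its numerator is exactly $u_i(P_i)-u_{i,0}$, and from the proof of Lemma \ref{lem-idealpower} the derivative $\partial u_i/\partial P_i=b_i\left(b_iP_i/(Q_i|f_i|^2)+1\right)^{-2}-\lambda$ is strictly positive for $0\le P_i<P_i^I(\lambda)$ in both Case 2 and Case 3. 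Hence the numerator, and therefore $r_i$, is continuous and strictly increasing on $\left(0,P_i^I(\lambda)\right]$, with $r_i\to 0^+$ as $P_i\to 0^+$ and $r_i\!\left(P_i^I(\lambda)\right)=1$. First I would use this to conclude that $r_i$ is a bijection onto $(0,1]$ admitting a continuous, strictly increasing inverse $P_i(\cdot)=r_i^{-1}(\cdot)$, and that the aggregate demand $T(k)\triangleq\sum_{i=1}^L P_i(k)$ is continuous and strictly increasing from $T(0^+)=0$ to $T(1)=\sum_{i=1}^L P_i^I(\lambda)$.

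With this reduction the KSBS problem (\ref{KSBS-prob-equi10}) becomes transparent: equating all ratios to a common $k$ forces $P_i=P_i(k)$, so the budget constraint reads $T(k)\le P$, and the largest admissible $k$ is $k^\star=1$ when $T(1)\le P$, and otherwise the unique root of $T(k)=P$, which exists by the intermediate value theorem and strict monotonicity of $T$. This produces the candidate allocation $P_i^\star=P_i(k^\star)$.

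The remaining step is to show $(P_i^\star)$ is also the unique maximizer of the max--min problem (\ref{coop-user1}) with optimal value $k^\star$. Since $(P_i^\star)$ is feasible and $\min_i r_i(P_i^\star)=k^\star$, the max--min value is at least $k^\star$. For the reverse inequality, take any feasible $(P_i)$ and set $m=\min_i r_i(P_i)\in(0,1]$; then $r_i(P_i)\ge m$ for every $i$, and inverting the strictly increasing $r_i$ gives $P_i\ge P_i(m)$, whence $P\ge\sum_i P_i\ge T(m)$. If $k^\star<1$ this forces $T(m)\le P=T(k^\star)$ and hence $m\le k^\star$ by monotonicity of $T$; if $k^\star=1$ then $m\le 1=k^\star$ trivially. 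So the max--min value equals $k^\star$, and uniqueness follows in the binding case $k^\star<1$ because $\sum_i\left(P_i-P_i(k^\star)\right)\le 0$ with every summand nonnegative forces $P_i=P_i^\star$, and in the case $k^\star=1$ because $r_i(P_i)=1$ holds only at $P_i=P_i^I(\lambda)$. I expect the main obstacle to be the careful verification of strict monotonicity of $r_i$ across the entire interval $\left(0,P_i^I(\lambda)\right]$, including the right endpoint, since it is this strict increase that makes the inverse $P_i(\cdot)$ and the whole scalar reduction well defined; the degenerate regime $k^\star=1$, where the relay budget already covers the total ideal demand, requires separate bookkeeping.
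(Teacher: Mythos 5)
Your proof is correct, but it takes a genuinely different route from the paper's. The paper argues locally: it defines $\psi_i(P_i)$ (your $r_i$), and shows by an exchange/perturbation contradiction that any optimizer of the max--min problem must equalize all the ratios --- if one user's ratio were strictly smallest, transferring a small $\epsilon$ of power from a user with a larger ratio would raise the minimum --- and then identifies the equal-ratio point with the KSBS solution. You instead argue globally: you invert each strictly increasing $r_i$ to get a demand function $P_i(k)$, aggregate into $T(k)=\sum_i P_i(k)$, and collapse both problems onto the single scalar condition $T(k)\le P$, obtaining $k^\star=\min\{1,T^{-1}(P)\}$ and the allocation $P_i^\star=P_i(k^\star)$ explicitly. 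Your route costs more bookkeeping (the bijectivity of $r_i$ onto $(0,1]$, the two regimes $k^\star=1$ versus $k^\star<1$), but it buys more: you get uniqueness of the optimizer and, as an immediate corollary, the content of Lemma \ref{lem-sumpower} (ideal powers are granted when $\sum_i P_i^I(\lambda)\le P$, and otherwise the full budget $P$ is allocated), which the paper proves separately with another contradiction argument. Your approach also sidesteps a small looseness in the paper's proof, whose ``without loss of generality'' posits a strict, unique minimizer $\psi_1(P_1^*)<\psi_2(P_2^*)<\min\{\psi_3(P_3^*),\dots\}$ and does not address ties at the minimum, whereas your two-sided bound $m\le k^\star\le$ (max--min value) needs no such case distinction. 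The one step worth spelling out in a final write-up is the strict monotonicity of $r_i$ up to and including the right endpoint $P_i^I(\lambda)$ (where $\partial u_i/\partial P_i$ may vanish in Case 2 of (\ref{eq-idealpower})); the integral argument $u_i(P_i')-u_i(P_i)=\int_{P_i}^{P_i'}\partial u_i/\partial P_i\,dP_i>0$ for $P_i<P_i'\le P_i^I(\lambda)$ closes it, since the integrand is positive on the interior.
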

\begin{proof}
First we use the notation $\psi_i(P_i)\triangleq
\frac{\frac{b_iP_i}{(Q_i|f_i|^2)^{-1}b_iP_i+1}-\lambda
P_i}{u_i^I-u_{i,0}}$. To prove this lemma, it is sufficient to show
that the power allocation solution in (\ref{coop-user1}), denoted as
$(P_1^*,\cdots,P_L^*)$, satisfies
$\psi_1(P_1^*)=\cdots=\psi_L(P_L^*)$. We prove this by
contradiction. Without loss of generality, suppose that
$\psi_1(P_1^*)<\psi_2(P_2^*)<\min\{\psi_3(P_3^*),
\cdots,\psi_L(P_L^*)\}$. Thus,
$\max_{P_i}\min_i\psi_i(P_i^*)=\psi_1(P_1^*)$. Since
$\psi_1(P_1),\psi_2(P_2)$ are increasing and continuous functions of
$P_1,P_2$ in the feasible region given in (\ref{coop-user1}), there
exists a small enough positive $\epsilon$ such that $P_1^*+\epsilon,
P_2^*-\epsilon$ are still in the feasible region and
\[\psi_1(P_1^*)<\psi_1(P_1^*+\epsilon)<\psi_2(P_2^*-\epsilon)<\psi_2(P_2^*).
\]
The new power allocation
$(P_1^*+\epsilon,P_2^*-\epsilon,P_3^*,\cdots,P_L^*)$ satisfies all
power constraints in (\ref{coop-user1}). Its max-min value is
$\psi_1(P_1^*+\epsilon)$ which is larger than the max-min value of
the solution $(P_1^*,\cdots,P_L^*)$. This contradicts the assumption
that $(P_1^*,\cdots,P_L^*)$ is optimal, thus completes the proof.
\end{proof}

(\ref{coop-user1}) is a convex optimization problem and can be
solved efficiently using standard convex optimization techniques \cite{*boyd}.
We call the solution of (\ref{coop-user1}) the KSBS-based power
allocation. Recall that in (\ref{coop-user1}), only the $L$ users
whose $b_i$'s are larger than the relay price $\lambda$ participate
in the game. The remaining $N-L$ users request no relay power.

In the game theoretical model in (\ref{coop-user1}), the power
constraint at the relay is taken into consideration. For any relay
price $\lambda$, (\ref{coop-user1}) will result in a feasible power
allocation among users, i.e., the total power demanded by the users
does not exceed the relay power constraint. Without the game
theoretical model, if, for example, for a given price, the users
request their ideal relay powers to maximize their individual
utilities, it may happen that the total power demand of the users
exceeds the relay power constraint, and the relay cannot make a
satisfactory power allocation. With the proposed KSBS-based relay
power allocation, when the sum of the ideal power demands of all
users does not exceed the relay power constraint, the users will be
allocated their ideal powers, in which case, $k$ in
(\ref{KSBS-prob-equi10}) reaches its maximum 1; when the sum of the
ideal power demands of all users exceeds the relay power constraint,
the proposed KSBS-based power allocation will allocate all relay
power to the users fairly. This is shown in the following lemma.
\begin{lemma}\label{lem-sumpower}
For a fixed $\lambda$, let the ideal power allocation of User $i$ be
$P^I_i(\lambda)$, which is given in (\ref{eq-idealpower}); and let
the KSBS-based power allocation be $P_i^K(\lambda)$ (K stands for
KSBS). When $\sum_{i=1}^L P^I_i(\lambda)\leq P$, we have
$P_i^K(\lambda)=P_i^I(\lambda)$; when $\sum_{i=1}^L
P^I_i(\lambda)>P$, we have $\sum_{i=1}^L P^K_i(\lambda)=P$.
\end{lemma}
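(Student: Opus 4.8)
The plan is to work directly with the max-min formulation in (\ref{coop-user1}) and exploit the monotone structure of the per-user objective. Write $\psi_i(P_i)$ for the normalized net utility gain $\frac{u_i(P_i)-u_{i,0}}{u_i^I-u_{i,0}}$ appearing in (\ref{coop-user1}) and (\ref{KSBS-prob-equi10}). Two facts I would record first are: (i) on the feasible interval $(0,P_i^I(\lambda)]$, $\psi_i$ is strictly increasing and continuous, which follows from the strict concavity of $u_i$ established in the proof of Lemma \ref{lem-idealpower} together with the fact that $u_i$ attains its maximum at $P_i^I(\lambda)$; this is exactly the monotonicity already invoked in the proof of Lemma \ref{lemma-equi}. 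And (ii) $\psi_i(P_i^I(\lambda))=1$, since at the ideal power the numerator $u_i-u_{i,0}$ coincides with the denominator $u_i^I-u_{i,0}$ by the definition of $u_i^I$ in (\ref{ideal-u}). Combining (i) and (ii) gives $\psi_i(P_i)\le 1$ throughout the feasible region, with equality only at $P_i=P_i^I(\lambda)$. I would also carry over the key conclusion from the proof of Lemma \ref{lemma-equi}: at the optimum $(P_1^K,\dots,P_L^K)$ all the normalized gains are equal, say $\psi_i(P_i^K)=k^*$ for every participating user $i$.

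For the first case $\sum_{i=1}^L P_i^I(\lambda)\le P$, I would simply verify that the allocation $P_i=P_i^I(\lambda)$ is feasible in (\ref{coop-user1}): it meets the upper bounds $P_i\le P_i^I(\lambda)$ with equality and satisfies the sum constraint by hypothesis. At this allocation every $\psi_i=1$, so $\min_i\psi_i=1$; since $\psi_i\le 1$ everywhere on the feasible set, no allocation can exceed this, so the max-min value equals $1$. Because $\psi_i$ is strictly increasing, $\psi_i(P_i)=1$ forces $P_i=P_i^I(\lambda)$, which yields uniqueness and hence $P_i^K(\lambda)=P_i^I(\lambda)$.

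For the second case $\sum_{i=1}^L P_i^I(\lambda)> P$, I would argue by contradiction, assuming $\sum_{i=1}^L P_i^K(\lambda)<P$. Feasibility gives $\sum_i P_i^K\le P<\sum_i P_i^I$, so the $P_i^K$ cannot all equal their ideal values; some user $j$ has $P_j^K<P_j^I$, whence $k^*=\psi_j(P_j^K)<\psi_j(P_j^I)=1$. Since $k^*<1$ and each $\psi_i$ is strictly increasing with $\psi_i(P_i^I)=1$, the equal-gains property forces $P_i^K<P_i^I(\lambda)$ for every $i$, so none of the upper-bound constraints is active. With the sum constraint slack (by the contradiction hypothesis) and every upper bound slack, I can raise each $P_i^K$ by a small $\epsilon_i>0$ while keeping $P_i^K+\epsilon_i\le P_i^I(\lambda)$ and $\sum_i(P_i^K+\epsilon_i)\le P$. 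Strict monotonicity then gives $\psi_i(P_i^K+\epsilon_i)>k^*$ for all $i$, so $\min_i\psi_i$ strictly increases, contradicting optimality; hence $\sum_{i=1}^L P_i^K(\lambda)=P$.

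The monotonicity of $\psi_i$ and the boundary value $\psi_i(P_i^I)=1$ are routine consequences of Lemma \ref{lem-idealpower}. The step requiring the most care is the simultaneous-increase argument in the second case: I must confirm that strict slack in the budget constraint \emph{and} in all the upper bounds genuinely leaves room to raise every power at once, which is precisely what produces the contradiction. The equal-gains characterization from Lemma \ref{lemma-equi} is essential here, as it rules out the otherwise plausible scenario in which some users already sit at their ideal power while the relay budget remains underused.
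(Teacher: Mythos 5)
Your proposal is correct and follows essentially the same route as the paper's own proof: both cases rest on the monotonicity of the normalized gains $\psi_i$ with $\psi_i(P_i^I(\lambda))=1$, the first case by showing the ideal allocation attains the max--min value $1$, and the second by the same simultaneous-$\epsilon$-increase contradiction (the paper likewise invokes the equal-ratio form (\ref{KSBS-prob-equi10}) to conclude $P_i^K(\lambda)<P_i^I(\lambda)$ for every user before perturbing). Your write-up is merely a little more explicit about strictness and uniqueness; there is no substantive difference.
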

\begin{proof} Again, we use the notation $\psi_i(P_i)\triangleq \frac{\frac{b_iP_i}{(Q_i|f_i|^2)^{-1}b_iP_i+1}-\lambda
P_i}{u_i^I-u_{i,0}}$. With the new feasible region of $P_i$ in
(\ref{coop-user1}), $\psi_i(P_i)$'s are increasing functions and
reach their maximum 1 when $P_i=P_i^I(\lambda)$. Thus $k \in [0,1]$
and achieves the maximum $k=1$ if and only if
$\sum_{i=1}^L{P^I_i(\lambda)} \leq P$, that is, when users can reach
their ideal utility with a feasible relay power. In this case,
$P_i^K(\lambda)=P_i^I(\lambda)$.

If $\sum_{i=1}^L{P^I_i(\lambda)}>P$, not all users can reach their
ideal utilities and thus $k<1$. From the equivalent form
(\ref{KSBS-prob-equi10}), actually no user can reach its ideal
utility. That is, $P_i^K(\lambda)<P_i^I(\lambda)$. Suppose that
$\sum_{i=1}^L{{P^K_i(\lambda)}}<P$. Define
\[\epsilon\triangleq\min_i\left\{\frac{P-\sum_{i=1}^L{{P^K_i(\lambda)}}}{L}, P^I_1(\lambda)-P^K_1(\lambda),\cdots,P^I_L(\lambda)-P^K_L(\lambda)\right\}.\] $\epsilon$ is a positive number. Now consider the power allocation $\tilde{P}_i(\lambda)\triangleq P_i^K(\lambda)+\epsilon$.
First, this new power allocation satisfies all power constraints due
to its construction. Also, as $\psi_i$'s are increasing functions,
the new power allocation results in a higher minimum value, that is
$\min_i\psi_i(\tilde{P}_i(\lambda))>\min_i\psi_i(P_i^K(\lambda))$,
which contradicts the assumption that $P_i^K(\lambda)$ is optimal.
This completes the proof.
\end{proof}

\subsection{Optimal Relay Power Price}
\label{sec-relay} Now we investigate the relay pricing problem. The
price of the relay power is crucial to the relay revenue and the
relay power allocation among the users. If the relay sets the price too
high, no user will buy any power, and the relay revenue will be
zero. If the relay sets the price too low, all users will ask for as
much power as possible; and even though all relay power can be sold,
the relay revenue will not be maximized.

With the unit price of the relay power $\lambda$, from Section III,
and by using the KSBS-based relay power allocation in Section IV.A,
the revenue of the relay is $\sum_{i=1}^N{\lambda P^K_i(\lambda)}$,
where $P^K_i(\lambda)$ is the relay power allocated to User $i$
based on the KSBS for the given price $\lambda$. The relay pricing
problem can be formulated as:
\begin{equation}
    \max_{\lambda}\sum_{i=1}^N{\lambda P^K_i(\lambda)}.
\label{relay-prob-coop}
\end{equation}
Note that the relay power constraint $\sum_{i=1}^NP_i^K(\lambda)\le
P$ is always guaranteed by the KSBS-based power allocation, thus
needs not to appear explicitly in the relay revenue maximization.

To solve the relay pricing problem, we first prove the following
lemma.
\begin{lemma}
The optimal price is inside the interval $[b_{lb},b_1)$, where
$b_{lb}$ satisfies the following equation:
\begin{equation}
\phi(b_{lb})\triangleq
\sum_{i=1}^N\max{\left\{0,\frac{Q_i|f_i|^2}{\sqrt{b_i}}\left(\frac{1}{\sqrt{b_{lb}}}-\frac{1}{\sqrt{b_i}}\right)\right\}}=P
\label{b_{lb}}
\end{equation}
and
\begin{equation}
b_{lb}\ge\max_i\left\{b_{i}\left(\frac{b_iP}{Q_i|f_i|^2}+1\right)^{-2}\right\}
\label{temp2}
\end{equation}
\label{lemma-range}
\end{lemma}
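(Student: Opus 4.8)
The plan is to prove the two-sided bound by treating the endpoints separately, exploiting the explicit form of the ideal demand in Lemma \ref{lem-idealpower} and the "all-power-sold" dichotomy of Lemma \ref{lem-sumpower}. First I would dispose of the upper bound $\lambda^\ast < b_1$. By the ordering (\ref{order-b}), if $\lambda \ge b_1$ then $\lambda \ge b_i$ for every $i$, so Case~1 of (\ref{eq-idealpower}) forces $P_i^I(\lambda)=0$ for all users; no user participates and the revenue in (\ref{relay-prob-coop}) is zero. On the other hand, for $\lambda$ slightly below $b_1$ we have $b_1 > \lambda$, so User~1 enters the game and is allocated strictly positive power, yielding strictly positive revenue. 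Hence the maximum of (\ref{relay-prob-coop}) is positive and cannot be attained at any $\lambda \ge b_1$, giving $\lambda^\ast < b_1$.

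For the lower bound $\lambda^\ast \ge b_{lb}$ the key idea is that at sufficiently low prices the relay sells its entire budget, so its revenue equals $\lambda P$ and is strictly increasing in $\lambda$. To make this precise I would first record that the total ideal demand $D(\lambda)\triangleq\sum_{i=1}^N P_i^I(\lambda)$ is non-increasing in $\lambda$: rewriting each $P_i^I(\lambda)$ in the $\max$--$\min$ form derived in the proof of Lemma \ref{P-b} exhibits it as $\max$ and $\min$ composed with the decreasing map $\sqrt{|g_i|^2/(\lambda[1+1/(Q_i|f_i|^2)])}-1$, which is monotone-preserving. Next I would use condition (\ref{temp2}) to identify $D(b_{lb})$ with $\phi(b_{lb})$. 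Since $b_{lb}\ge\max_i\{b_i(b_iP/(Q_i|f_i|^2)+1)^{-2}\}$, no user lies in the saturated Case~3 of (\ref{eq-idealpower}) at $\lambda=b_{lb}$, so each $P_i^I(b_{lb})$ equals either $0$ or the Case~2 expression, which is exactly the summand defining $\phi$. Therefore $D(b_{lb})=\phi(b_{lb})=P$ by (\ref{b_{lb}}).

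Combining these facts, monotonicity gives $D(\lambda)\ge D(b_{lb})=P$ for every $\lambda\le b_{lb}$, so Lemma \ref{lem-sumpower} yields $\sum_{i=1}^N P_i^K(\lambda)=P$ and the revenue is exactly $\lambda P$ on this range. As $\lambda P$ is strictly increasing, any $\lambda<b_{lb}$ is strictly dominated by $\lambda=b_{lb}$, whose revenue is $b_{lb}P$; hence no maximizer lies below $b_{lb}$, i.e. $\lambda^\ast\ge b_{lb}$. Together with the upper bound this places every optimal price in $[b_{lb},b_1)$.

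The hard part will be the bookkeeping in the middle step. The function $\phi$ uses the \emph{uncapped} Case~2 formula for every user, so identifying $\phi(b_{lb})$ with the genuine demand $D(b_{lb})$ rests essentially on condition (\ref{temp2}) ruling out the capped Case~3 regime at $\lambda=b_{lb}$; without it the two quantities differ and $\phi(b_{lb})=P$ would not translate into "all power sold." The remaining technical points---verifying the monotonicity of $D$ and handling the boundary user that sits exactly at the Case~2/Case~3 threshold, where both expressions evaluate to $P$---are routine once the allocation is written in the $\max$--$\min$ form of Lemma \ref{P-b}.
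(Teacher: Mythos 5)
Your argument for the interval claim itself tracks the paper's proof closely: the upper bound via zero demand at $\lambda\ge b_1$, and the lower bound via monotonicity of the total ideal demand, the identity $\sum_i P^I_i(b_{lb})=\phi(b_{lb})=P$, Lemma~\ref{lem-sumpower} forcing all relay power to be sold on $[0,b_{lb}]$, and the resulting revenue $\lambda P$ being increasing there. That part is sound and is essentially the paper's route.

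The gap is that (\ref{temp2}) is itself one of the assertions of the lemma, and your proof never establishes it --- it only \emph{uses} it. You correctly observe that identifying $\phi(b_{lb})$ with the true total demand $D(b_{lb})$ ``rests essentially on condition (\ref{temp2}) ruling out the capped Case~3 regime,'' but you then take that condition as a hypothesis rather than deriving it from the definition of $b_{lb}$ as the solution of (\ref{b_{lb}}). The paper proves it by contradiction: if $b_{lb}<b_i\left(\frac{b_iP}{Q_i|f_i|^2}+1\right)^{-2}$ for some $i$, then already the $i$th summand of $\phi(b_{lb})$ satisfies $\frac{Q_i|f_i|^2}{\sqrt{b_i}}\left(\frac{1}{\sqrt{b_{lb}}}-\frac{1}{\sqrt{b_i}}\right)>\frac{Q_i|f_i|^2}{\sqrt{b_i}}\cdot\frac{1}{\sqrt{b_i}}\cdot\frac{b_iP}{Q_i|f_i|^2}=P$, so $\phi(b_{lb})>P$, contradicting (\ref{b_{lb}}). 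Without this step your chain breaks exactly where you yourself flagged it would. A second, smaller omission: you never argue that (\ref{b_{lb}}) has a (unique) solution in $(0,b_1)$ at all; the paper disposes of this by noting that $\phi$ decreases monotonically from $\infty$ to $0$ as its argument runs over $(0,b_1)$. Both fixes are short, but as written the proof is incomplete.
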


\begin{proof}
First we can see that $\phi(b_{lb})$ monotonically decreases from
$\infty$ to $0$ as $b_{lb}$ increases from 0 to $b_1$. Thus, the
equation (\ref{b_{lb}}) has a unique positive solution inside
$(0,b_1)$.

Then we prove (\ref{temp2}) by contradiction. Assume that
$b_{lb}<b_{1}\left(\frac{b_1P}{Q_1|f_1|^2}+1\right)^{-2}$. Thus,
\[
\phi(b_{lb})\ge \max\left\{0,\frac{Q_1|f_1|^2}
{\sqrt{b_1}}\left(\frac{1}{\sqrt{b_{lb}}}-\frac{1}{\sqrt{b_1}}\right)\right\}
>P,
\]
which conflicts (\ref{b_{lb}}). So $b_{lb}\geq
b_{1}\left(\frac{b_1P}{Q_1|f_1|^2}+1\right)^{-2}$. Similarly, we can
show that $b_{lb}\geq
b_{i}\left(\frac{b_iP}{Q_i|f_i|^2}+1\right)^{-2}$ for
$i=2,\cdots,N$. Thus (\ref{temp2}) is proved.

Now we show that the optimal price is no less than $b_{lb}$. Using
the result in (\ref{temp2}) and from (\ref{eq-idealpower}), when the
relay power price is $b_{lb}$, i.e., $\lambda=b_{lb}$, we have
\begin{equation}
\sum_{i=1}^N{P^I_i(b_{lb})}=\phi(b_{lb})=P. \label{temp1}
\end{equation}
Also from (\ref{eq-idealpower}), $P^I_i(\lambda)$ is a continuous
and non-increasing function of $\lambda$. So
$\sum_{i=1}^N{P^I_i(\lambda)}$ is a continuous and non-increasing
function of $\lambda$. Inside the price range $[0,b_{lb}]$, i.e.,
$\lambda< b_{lb}$, we have $\sum^L_{i=1}{P^I_i(\lambda)}\geq P$
based on (\ref{temp1}). With the KSBS-based power allocation,
according to Lemma \ref{lem-sumpower}, all power of the relay will
be allocated to the users , i.e., $\sum^L_{i=1}{P^K_i(\lambda)}=P$.
The relay revenue maximization when the price is with $[0, b_{lb}]$
becomes:
\begin{equation}
\max_{0\leq\lambda\leq
b_{lb}}{\lambda\sum^L_{i=1}{P^K_i(\lambda)}}=b_{lb} P,
\end{equation}
which is reached at $\lambda=b_{lb}$. So the optimal price in the
range $[0,b_{lb}]$ is $b_{lb}$, which shows that the optimal price
is no less than $b_{lb}$.

To prove the upper bound on the relay price, note that when $\lambda
\geq b_1$, from (\ref{eq-idealpower}), $P_i^I(\lambda)=0$ for all
$i$, i.e., no user will buy any power from the relay and the relay
revenue will be $0$. So any price in the range
$\left[b_1,+\infty\right)$ is not optimal for the relay revenue, and
the optimal price must be in the range $[b_{lb}, b_1)$.
\end{proof}

The value of $b_{lb}$ can be obtained by solving the equation in
(\ref{b_{lb}}). This is a generalized waterfilling problem
\cite{*waterfilling}, where $1/\sqrt{\lambda}$ is the water-level,
$1/\sqrt{b_i}$ is the ground level of User $i$, and
$Q_i|f_i|^2/\sqrt{b_i}$ are the weights that can be visually
interpreted as the width of each patch. In this paper, we can find
the value of $b_{lb}$ analytically. Notice that $\phi(b_{lb})$ is a
decreasing function of $b_{lb}$ and $b_i$'s are in non-increasing
order. We can first find the $M$ such that $\phi(b_M)<P$ and
$\phi(b_{M+1})>P$. Thus, $b_{lb}\in[b_M,b_{M+1}]$. Within this
interval,
$\phi(b_{lb})=\sum_{i=1}^M\frac{Q_i|f_i|^2}{\sqrt{b_i}}\left(\frac{1}{\sqrt{b_{lb}}}-\frac{1}{\sqrt{b_i}}\right)
=\left(\sum_{i=1}^M\frac{Q_i|f_i|^2}{\sqrt{b_i}}\right)\frac{1}{\sqrt{b_{lb}}}-\left(\sum_{i=1}^M\frac{Q_i|f_i|^2}{b_i}\right)$.
Thus, from $\phi(b_{lb})=P$, we have
\begin{equation}
b_{lb}=\left(\sum_{i=1}^M\frac{Q_i|f_i|^2}{\sqrt{b_i}}\right)^{2}\left(P+\sum_{i=1}^M\frac{Q_i|f_i|^2}{b_i}\right)^{-2}.
\label{b-lb}
\end{equation}



In what follows, we solve the optimal relay power price
analytically. First, several notation are introduced. Recall the
ordering of the users based on their $b_i$ values in (\ref{order-b}) and
$M$ is the index such that $b_M\ge b_{lb}\ge b_{M+1}$. That is, the
$b_i$'s of the first $M$ users are no less than $b_{lb}$, while the
$b_i$'s of the remaining users are no larger than $b_{lb}$.
We have shown in Lemma \ref{lemma-range} that only the price range
$[b_{lb},b_1)$ needs to be considered for the optimal price. Define $\gamma_i\triangleq b_i$ for $i=1,\cdots,M$ and
$\gamma_{M+1}\triangleq b_{lb}$. Further define $\Gamma_1\triangleq
[\gamma_2,\gamma_1)$ and $\Gamma_i\triangleq
[\gamma_{i+1},\gamma_i]$ for $i=,2\cdots,M$. We thus can have divided
the price range $[b_{lb},b_1)$ into the following $M$ intervals:
\setlength{\arraycolsep}{0pt}
\begin{eqnarray}
[b_{lb},b_1) &=& [b_{lb},b_M]\cup[b_{M}, b_{M-1}]\cup\cdots\cup[b_{3},b_{2}]\cup[b_{2}, b_1) \nonumber\\
&\triangleq&\Gamma_M\cup\Gamma_{M-1}\cdots\cup\Gamma_2\cup\Gamma_1.
\label{int-decomp}
\end{eqnarray}
\setlength{\arraycolsep}{5pt}

Inside
the price range $[b_{lb},b_1)$, because $\sum_{i=1}^NP_i^I(\lambda)$ is a
non-increasing function of $\lambda$ and (\ref{b_{lb}}), we have
$\sum_{i=1}^NP_i^I(\lambda)\le P$. Thus,
$P_i^K(\lambda)=P_i^I(\lambda)$ from Lemma \ref{lem-sumpower}. We
can thus rewrite the price optimization problem in
(\ref{relay-prob-coop}) into
\begin{equation}
    \max_{i=1,2,\cdots M}{\max_{\lambda\in\Gamma_i}{\sum^i_{j=1}{\lambda P_j^I(\lambda)}}}.
\label{rewrite}
\end{equation}
In (\ref{rewrite}), we have decomposed the optimization problem into
$M$ subproblems, where the $i$th subproblem is to find the optimal
price within the range $\Gamma_i$ where User 1 to $i$ purchase
non-zero power from the relay:
\begin{equation}
\mbox{Sub-problem $i$}:\hspace{4mm}
\max_{\lambda\in\Gamma_i}{\sum^i_{j=1}{\lambda P_j^I(\lambda)}}.
\label{sub-prob}
\end{equation}
The following proposition is proved to solve the sub-problem.
\newtheorem{proposition}{Proposition}
\begin{proposition}\label{prop-subprob}
For $i=1,2,\cdots,M$, define
\begin{equation}
c_i\triangleq\left(\frac{\sum^i_{j=1}Q_j|f_j|^2/\sqrt{b_j}}{2\sum^i_{j=1}Q_j|f_j|^2/b_j}\right)^2.
\label{def-c}
\end{equation}
 The solution to (\ref{sub-prob}) is
\begin{equation}
\quad \lambda_i\triangleq\left\{
\begin{array}{ll}
\gamma_{i+1} & \text{if }\ c_i<\gamma_{i+1},\\
\gamma_i& \text{if }\ c_i>\gamma_i,\\
c_i& \text{if }\ \gamma_{i+1}\leq c_i\leq \gamma_i.
\end{array}\right.
\label{sub-solution}
\end{equation}
\end{proposition}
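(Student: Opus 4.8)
The plan is to reduce each sub-problem to a one-dimensional optimization of an explicit, smooth, unimodal objective, and then read off the solution by projecting the unconstrained maximizer onto the interval $\Gamma_i$.

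First I would argue that throughout $\Gamma_i=[\gamma_{i+1},\gamma_i]$ the set of active users is exactly $\{1,\dots,i\}$. Indeed, for $\lambda\in\Gamma_i$ and $j\le i$ the ordering (\ref{order-b}) gives $b_j\ge b_i=\gamma_i\ge\lambda$, while for $j\ge i+1$ it gives $b_j\le b_{i+1}=\gamma_{i+1}\le\lambda$. Combined with the lower bound (\ref{temp2}) established in Lemma \ref{lemma-range}, every user $j\le i$ falls in Case 2 of (\ref{eq-idealpower}) (with the boundary value $P_j^I=0$ only at $\lambda=b_j$), and every user $j\ge i+1$ falls in Case 1 with $P_j^I=0$. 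Hence the objective in (\ref{sub-prob}) really is a sum over $j=1,\dots,i$ of the Case-2 demand across the \emph{entire} interval.

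Next I would substitute the Case-2 formula $P_j^I(\lambda)=\frac{Q_j|f_j|^2}{\sqrt{b_j}}\bigl(\frac{1}{\sqrt{\lambda}}-\frac{1}{\sqrt{b_j}}\bigr)$ into $\sum_{j=1}^i\lambda P_j^I(\lambda)$. The key simplification is that the objective collapses to
\[
f_i(\lambda)=A_i\sqrt{\lambda}-B_i\lambda,\qquad A_i\triangleq\sum_{j=1}^i\frac{Q_j|f_j|^2}{\sqrt{b_j}},\quad B_i\triangleq\sum_{j=1}^i\frac{Q_j|f_j|^2}{b_j},
\]
which is strictly concave in the variable $\sqrt{\lambda}$ since $B_i>0$. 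Differentiating, $f_i'(\lambda)=\frac{A_i}{2\sqrt{\lambda}}-B_i$, which is positive for $\lambda<(A_i/2B_i)^2$ and negative for $\lambda>(A_i/2B_i)^2$; thus $f_i$ is unimodal with a unique unconstrained maximizer at $\lambda=(A_i/2B_i)^2$, which is precisely $c_i$ as defined in (\ref{def-c}).

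Finally, because $f_i$ increases to the left of $c_i$ and decreases to the right, the constrained maximizer over $\Gamma_i=[\gamma_{i+1},\gamma_i]$ is the projection of $c_i$ onto the interval: it equals the left endpoint $\gamma_{i+1}$ when $c_i<\gamma_{i+1}$, the right endpoint $\gamma_i$ when $c_i>\gamma_i$, and $c_i$ itself when $\gamma_{i+1}\le c_i\le\gamma_i$. This is exactly (\ref{sub-solution}). The only step demanding care is the opening one, namely verifying that the active set stays constant across all of $\Gamma_i$ so that the single closed form $f_i(\lambda)=A_i\sqrt{\lambda}-B_i\lambda$ is valid on the whole interval; once that is secured, the remaining optimization is elementary.
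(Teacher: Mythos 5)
Your proof is correct and follows essentially the same route as the paper's: restrict to the active users $\{1,\dots,i\}$ on $\Gamma_i$, collapse the objective to $A_i\sqrt{\lambda}-B_i\lambda$, and use its concavity to project the unconstrained maximizer $c_i$ onto the interval. Your opening step is in fact slightly more careful than the paper's, since you explicitly invoke the lower bound (\ref{temp2}) to rule out Case 3 of (\ref{eq-idealpower}) for users $j\le i$, which the paper leaves implicit.
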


\begin{proof}
When $\lambda\in\Gamma_{i}$, for $1\leq j\leq i$, from
(\ref{eq-idealpower}), User $j$ will ask for
$\frac{Q_i|f_i|^2}{\sqrt{b_i}}\left(\frac{1}{\sqrt{\lambda}}-\frac{1}{\sqrt{b_i}}\right)$
units of power, and User $(i+1)$ to User $M$ will ask for zero relay
power. Subproblem (\ref{sub-prob}) can be rewritten as
\begin{equation}
\max_{\lambda\in\Gamma_i}
\left\{\lambda\sum^i_{j=1}{\frac{Q_i|f_i|^2}{\sqrt{b_i}}\left(\frac{1}{\sqrt{\lambda}}-\frac{1}{\sqrt{b_i}}\right)}\right\}
=\max_{\lambda\in\Gamma_i} \phi_{R,i}(\lambda),
    \label{sub_uR}
\end{equation}
where
$\phi_{R,i}(\lambda)\triangleq\left(\sum^i_{j=1}\frac{Q_j|f_j|^2}{\sqrt{b_j}}\right)\sqrt{\lambda}
-\left(\sum^i_{j=1}\frac{Q_j|f_j|^2}{b_j}\right) \lambda$. In
(\ref{sub_uR}), $\phi_{R,i}(\lambda)$ is the relay revenue given the
price $\lambda \in \Gamma_i$. It can be shown through
straightforward calculation that $d\phi_{R,i}(\lambda)/d\lambda=0$
when $\lambda=c_i$, defined in Proposition \ref{prop-subprob}, and
$\frac{d^2\phi_{R,i}(\lambda)}{d\lambda^2}<0$, when
$\lambda\in\Gamma_i$. Therefore, if $ c_i>\gamma_i$,
$\phi_{R,i}(\lambda)$ reaches its maximum at $\gamma_i$; if
$c_i<\gamma_{i+1}$, it reaches its maximum at $\gamma_{i+1}$; and if
$\gamma_{i+1}\leq c_i \leq \gamma_i$, it reaches its maximum at
$c_i$.
\end{proof}

With the subproblems solved, we are ready to find the optimal relay
power price. The result is given in the following theorem.
\newtheorem{theorem}{Theorem}
\begin{theorem}\label{theorem-pricing}
The optimal relay power price, denoted as $\lambda^*$, is
\begin{equation}
  \lambda^*= \arg \max_{\lambda_i}\left\{ \left(\sum^i_{j=1}\frac{Q_j|f_j|^2}{\sqrt{b_j}}\right)\sqrt{\lambda_i}-\sum^i_{j=1}\frac{Q_j|f_j|^2}{b_j}\lambda_i\right\},
\label{opt-price}
\end{equation}
where $\lambda_i$ is defined in Proposition \ref{prop-subprob}.
\end{theorem}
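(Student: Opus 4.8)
The plan is to assemble the theorem directly from the reduction carried out just above its statement together with Proposition \ref{prop-subprob}; at this point most of the analytic work is already done, and what remains is to nest the two maximizations correctly. First I would recall that Lemma \ref{lemma-range} confines the optimal price to $[b_{lb},b_1)$, and that on this entire range $\sum_i P_i^I(\lambda)\le P$, so Lemma \ref{lem-sumpower} gives $P_i^K(\lambda)=P_i^I(\lambda)$. This turns the relay revenue into $\sum_i \lambda P_i^I(\lambda)$ and reduces the pricing problem (\ref{relay-prob-coop}) to the nested form (\ref{rewrite}).

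Next I would invoke the partition (\ref{int-decomp}) of $[b_{lb},b_1)$ into the $M$ subintervals $\Gamma_i$. On $\Gamma_i$, the ordering (\ref{order-b}) together with the location of the endpoints $b_{i+1}$ and $b_i$ guarantees that exactly Users $1$ through $i$ satisfy $b_j>\lambda$; by Lemma \ref{lem-idealpower} these are precisely the users demanding nonzero power, so the revenue collapses to the single smooth function $\phi_{R,i}(\lambda)$. Proposition \ref{prop-subprob} then solves each inner subproblem (\ref{sub-prob}): the constrained maximizer over $\Gamma_i$ is $\lambda_i$ of (\ref{sub-solution}), with optimal value $\phi_{R,i}(\lambda_i)$. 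Substituting these inner optima back into (\ref{rewrite}), the outer maximization reduces to a comparison of the $M$ numbers $\phi_{R,i}(\lambda_i)$, and the price attaining the largest of them is exactly the expression (\ref{opt-price}), which proves the theorem.

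The main obstacle is not analytic but organizational: I must verify that the decomposition (\ref{int-decomp}) is exhaustive and that no candidate maximizer is lost at the shared endpoints $b_2,\ldots,b_M$, where consecutive $\Gamma_i$ overlap. Because each $\phi_{R,i}$ is continuous and the subintervals are taken closed, a price sitting at a shared boundary is covered by both adjacent subproblems, so the finite candidate set $\{\lambda_1,\ldots,\lambda_M\}$ genuinely contains a global maximizer of the objective over the continuum $[b_{lb},b_1)$. The remaining care is simply to confirm that replacing the maximization over the continuum by a maximization over this finite set is legitimate, which follows because on each $\Gamma_i$ the objective is the single concave function $\phi_{R,i}$ whose constrained maximizer is captured by (\ref{sub-solution}).
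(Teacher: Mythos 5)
Your proposal is correct and follows essentially the same route as the paper, whose proof of Theorem \ref{theorem-pricing} simply cites Proposition \ref{prop-subprob} and the decomposition (\ref{rewrite}); you have merely spelled out the reduction (Lemma \ref{lemma-range}, Lemma \ref{lem-sumpower}, the partition (\ref{int-decomp}), and the comparison of the $M$ values $\phi_{R,i}(\lambda_i)$) that the paper leaves implicit. The added care about shared endpoints is a sound but minor refinement, not a different argument.
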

\begin{proof}
This is a natural result of Proposition \ref{prop-subprob} and
(\ref{rewrite}).
\end{proof}

With Theorem \ref{theorem-pricing}, we can find the optimal price
for the relay power by solving the $M$ subproblems  in
(\ref{rewrite}) analytically using Proposition \ref{prop-subprob},
then find the optimal price among the $M$ sub-problem solutions that
results in the maximum relay revenue. This is written as Algorithm
\ref{alg-pricing}. We can see that, after ordering (whose average
complexity is $N\log{N}$), its complexity is linear in the number of
users in the network.

\begin{algorithm}
\caption{Optimal relay power price.} \label{alg-pricing}
\begin{algorithmic}[1]
\STATE Calculate $b_i$'s using (\ref{def-b}). Order the $N$ users such
that
 $b_1\geq b_2\geq \cdots \geq b_N$.
\STATE  Find $M$ then $b_{lb}$ using (\ref{b-lb}). \STATE Initialize
$\gamma_i$: $\gamma_i=b_i$ for $i=1,\cdots,M$ and
$\gamma_{M+1}=b_{lb}$. \STATE Calculate $c_i$'s for $i=1,\cdots,M$
using (\ref{def-c}). \STATE For $i=1,\cdots,M$, find $\lambda_i$
using (\ref{sub-solution}). \STATE Find the optimal price
$\lambda^*$ using (\ref{opt-price}).
\end{algorithmic}
\end{algorithm}

Previously, we have shown that $b_i$ is an important factor for the
ideal relay power. Here we can see that it is also important for the
optimal relay price. We prove the following lemma, which further
reflects the importance of $b_i$.
\begin{lemma}
If $b_1<4b_{lb}$, the optimal price for the relay is $b_{lb}$.
\label{remark-optimal-price}
\end{lemma}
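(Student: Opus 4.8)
The plan is to show that, under the hypothesis $b_1<4b_{lb}$, the total relay revenue $R(\lambda)\triangleq\sum_{i=1}^N\lambda P_i^K(\lambda)$ is strictly decreasing in $\lambda$ over the entire candidate range $[b_{lb},b_1)$ identified in Lemma \ref{lemma-range}, so that its maximum is attained at the left endpoint and $\lambda^*=b_{lb}$. I would work from the decomposition (\ref{rewrite}) and Proposition \ref{prop-subprob}: inside $[b_{lb},b_1)$ we have $P_i^K(\lambda)=P_i^I(\lambda)$ by Lemma \ref{lem-sumpower}, so on each subinterval $\Gamma_i$ the revenue equals the strictly concave function $\phi_{R,i}(\lambda)$, whose unconstrained maximizer is $c_i$ defined in (\ref{def-c}). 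If I can establish that every $c_i$ lies strictly to the left of $\Gamma_i$, then each $\phi_{R,i}$ is strictly decreasing on $\Gamma_i$, and piecing these segments together will yield a revenue that is decreasing throughout $[b_{lb},b_1)$.

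The crux --- and the step I expect to be the main obstacle --- is the uniform bound $c_i\le b_1/4$ for all $i=1,\dots,M$. Writing $a_j\triangleq Q_j|f_j|^2$ and $w_j\triangleq a_j/b_j>0$, I would rewrite
\begin{equation}
c_i=\frac14\left(\frac{\sum_{j=1}^i a_j/\sqrt{b_j}}{\sum_{j=1}^i a_j/b_j}\right)^2=\frac14\left(\frac{\sum_{j=1}^i w_j\sqrt{b_j}}{\sum_{j=1}^i w_j}\right)^2 .
\end{equation}
The inner quantity is a convex (weighted) average of the values $\sqrt{b_1},\dots,\sqrt{b_i}$, hence is bounded above by $\sqrt{b_1}$ because of the ordering (\ref{order-b}). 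Squaring and dividing by four gives $c_i\le b_1/4$. This is the only place where the particular algebraic structure of $c_i$ is genuinely exploited.

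With this bound in hand the conclusion follows cleanly. Under $b_1<4b_{lb}$ we obtain $c_i\le b_1/4<b_{lb}$. Since $\gamma_{M+1}=b_{lb}$ and the $\gamma$'s are non-increasing, $\gamma_{i+1}\ge b_{lb}>c_i$ for every $i\le M$, so the first case of Proposition \ref{prop-subprob} applies: each subproblem is solved at its lower endpoint $\lambda_i=\gamma_{i+1}$, and, the vertex $c_i$ lying strictly below $\Gamma_i$, $\phi_{R,i}$ is strictly decreasing on all of $\Gamma_i$. To pass from the pieces to the whole interval I would invoke continuity of $R(\lambda)$ across the junction prices $b_2,\dots,b_M$: at $\lambda=b_{i+1}=\gamma_{i+1}$ the newly excluded User $(i+1)$ demands zero power, so $\phi_{R,i}$ and $\phi_{R,i+1}$ agree there. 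A function that is continuous on $[b_{lb},b_1)$ and strictly decreasing on each of the finitely many covering pieces $\Gamma_M,\dots,\Gamma_1$ is strictly decreasing on the whole interval, whence its maximum is attained at $\lambda^*=b_{lb}$, as claimed.
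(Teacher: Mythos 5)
Your proposal is correct and follows essentially the same route as the paper: both arguments reduce to showing $c_i<\gamma_{i+1}$ for every $i=1,\dots,M$ so that the first case of Proposition \ref{prop-subprob} applies on each $\Gamma_i$, and then glue the subintervals together. The only differences are cosmetic --- you obtain the key inequality via the weighted-average bound $c_i\le b_1/4<b_{lb}\le\gamma_{i+1}$ rather than the paper's term-by-term comparison of $Q_j|f_j|^2/\sqrt{b_j}$ with $2Q_j|f_j|^2\sqrt{\gamma_{i+1}}/b_j$, and you spell out the continuity-at-junctions step that the paper leaves implicit.
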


\begin{proof}
First recall that $b_1\ge \cdots\ge b_{M-1}\ge b_{M}\ge b_{lb}$.
When $b_1<4b_{lb}$, for $i=1,\cdots,M$ and $j=1,\cdots,i$, we have
$b_j\le b_1<4\gamma_{M+1}\le4\gamma_{i+1}$. Therefore,
\begin{equation*}
 \frac{Q_j |f_j|^2}{\sqrt{4 \gamma_{i+1}}} < \frac{Q_j |f_j|^2}{\sqrt{b_j}} \Leftrightarrow
 \frac{Q_j|f_j|^2}{\sqrt{b_j}}<\frac{2Q_j|f_j|^2\sqrt{\gamma_{i+1}}}{b_j},
\end{equation*}
and
\begin{equation*}
 \sqrt{c_i}=\frac{\sum^i_{j=1}Q_j|f_j|^2/\sqrt{b_j}}{2\sum^i_{j=1}Q_j|f_j|^2/b_j}
 <\frac{2\sqrt{\gamma_{i+1}}\sum^i_{j=1}Q_j|f_j|^2/b_j}{2\sum^i_{j=1}Q_j|f_j|^2/b_j}=\sqrt{\gamma_{i+1}}
\end{equation*}
for $i=1,\cdots,M$. From Proposition 1, within the range $\Gamma_i$
where $i=1, \cdots, M$, the optimal price is $\gamma_{i+1}$, the
lower bound of $\Gamma_i$. So the optimal price in the range
$[\gamma_{M+1},\gamma_1)$ is $\gamma_{M+1}$, which is $b_{lb}$.
\end{proof}
Lemma \ref{remark-optimal-price} says that when the difference
between $b_1$ and $b_{lb}$ is small, that is, the conditions of the
users are not too separate apart, the relay should set its price to
be low so all users can gain some benefits. On the contrary, when
some users have a much higher $b_i$ than others, the price will be
higher than $b_{lb}$ and those users with lower $b_i$'s may not
purchase the relay service because the price is too high compared to
the SNR gain they may receive.

\section{Discussion}\label{sec-discussion}
In this section, we discuss possible implementation of the proposed
relay power allocation and pricing solutions, properties of the
power allocation solution, and applications of the proposed
solutions to some special network scenarios.

The first to discuss is the implementation of the proposed relay
power allocation and power pricing solutions. In practical network
applications, to use the proposed scheme, the relay, which is the
service provider and has perfect and global channel knowledge, first
finds the optimal price of the relay power using Algorithm
\ref{alg-pricing}. With this optimal price, the relay then finds the
KSBS-based solution for the relay power allocation problem given in
(\ref{coop-user1}). With this implementation, we actually assume
that the relay is trustworthy. All users believe that the relay
will not change the parameter values (e.g., the CSI); and the relay
uses the above procedure to set the price and determine the
KSBS-based power allocation, and follows the results to help all
users in their transmission. For the relay to know the channel
gains from the users to itself, training and channel estimation
should be performed at the relay. For the relay to know the channel
gains from itself to the destinations, feedback from the
destinations to the relay are required. The proposed algorithm is a
centralized one instead of distributed.

Now we discuss properties of the KSBS-based power allocation.
When the relay sets its price to be the optimal, from the analysis
in Section \ref{sec-relay}, all users will be allocated their ideal
relay powers, $P_i^I(\lambda)$, and the individual utilities of the
users are maximized. This is the ideal case and requires the relay
to have perfect CSI. However, in reality, CSI at the relay is
subject to error and delay, in which case, the relay may
set a price different to the optimal one. Sometimes, the relay may
want to set its price different to the optimal one due to other
reasons such as marketing considerations. Our bargaining game model
and KSBS-based power allocation is robust to the relay price
fluctuation in the sense that a ``fair" relay power allocation among
the users can still be made. Specifically, if the relay power price is
set to be higher than or equal to $b_{lb}$, defined in
(\ref{b_{lb}}), with the KSBS-based power allocation, each user gets
its ideal power demand; if the relay power price is set to be lower
than $b_{lb}$, no user can get its ideal relay power but the relay
power will be fairly allocated to the users such that the utility
losses of the  users are the same in the logarithmic scale; and all
relay power will allocated.

Finally, we discuss the application of the proposed solution to two
special network scenarios. One network application is the
multi-user, single-relay, and single-destination network, also
addressed as multi-access relay network (MARN) in some papers
\cite{*MARN1,*MARN3,*LLB1,*LLB2,*Tairan,*Sha,*Guftaar}. The proposed
scheme can be directly applied to MARNs by setting $g_1=\cdots=g_N$
in the network formulation. From Lemma \ref{P-b}, $P^I_i(\lambda)$
is a non-decreasing functions of $Q_i$ and $f_i$. Thus, with the
relay to destination channel the same for all users, users with
better user-relay channels or higher transmit powers will be
allocated more relay power. Another popular network scenario is the
multi-user single-relay network with no direct links. Our solutions
again can be applied straightforwardly as the solutions are
independent of the direct link.

\section{Simulation Results}\label{sec-simulation}
In this section, we show the simulated performance of the proposed
relay power allocation and pricing solutions, and compare them with
the sum-rate-optimal power allocation and the even power allocation.
Sum-rate-optimal power allocation solution is the relay power
allocation among the users that maximizes the network sum-rate. For the even power allocation, the relay
allocates $1/N$ of its total power to each of the $N$ user, and each
user decides how much power to buy from the relay to maximize its
utility. That is, the relay power allocated to User $i$ is
$\min\{P_i^I(\lambda), P/N\}$. Two channel models are considered:
the Rayleigh flat-fading channel and the static channel with
path-loss only.
\subsection{Network with Rayleigh Flat-Fading Channels}

\begin{figure}
  \centering
  \includegraphics[width=0.6\textwidth]{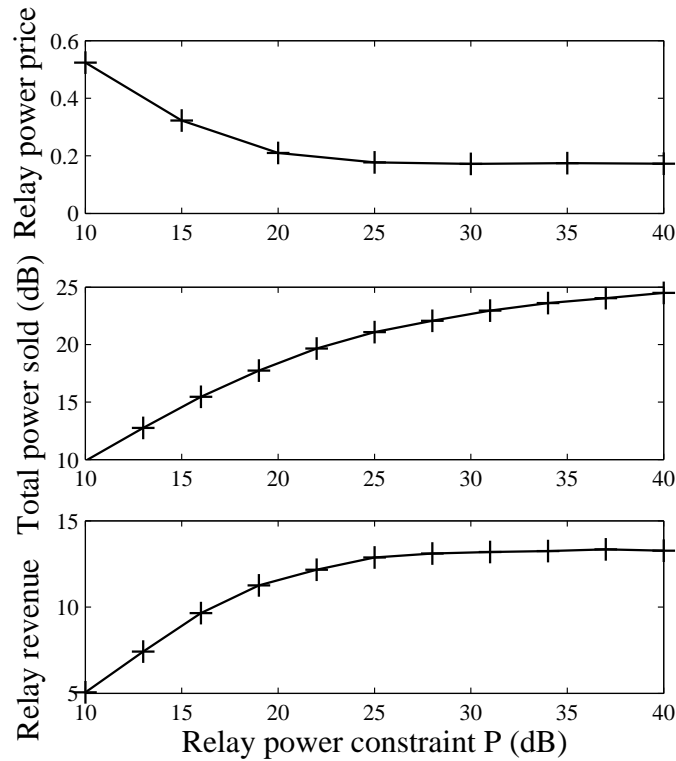}\\
  \caption{Three-user relay network with Rayleigh fading channels and different relay power constraints.}\label{simu:1a}
\end{figure}

In the first numerical experiment, the channels are modeled as
i.i.d.~Rayleigh flat-fading, i.e., $f_i, h_i$, and $g_i$ are
generated as i.i.d.~random variables following the distribution
$\mathcal{CN}(0,1)$. We consider a network with three users. The
transmit powers of the users are set to be $10$ dB. The simulation
results follow the same trend for other values of user powers.

We first investigate the network performance when the relay power
ranges from $10$ dB to $40$ dB. This corresponds to the scenario
where the total user demand is fixed while the relay power supply
increases. We set the relay power price to be the optimal according
to Theorem \ref{prop-subprob}. Figure \ref{simu:1a} shows the
optimal relay power price, the relay power actually sold, and  the
maximum relay revenue under different relay power constraints. We
can see that when the relay has more power to sell, the optimal
relay power price is lower, more relay power is sold, and the relay
receives more revenue. This complies with one of the laws of supply
and demand \cite{*Besanko}, which says that if supply increases and
demand remains unchanged, then it leads to lower equilibrium price
and higher quantity.

Figure \ref{simu:1b} compares the network sum-rate and fairness of
the proposed KSBS-based power allocation with those of the
sum-rate-optimal power allocation and the even power allocation. We
set the relay power price to be the optimal according to Theorem
\ref{prop-subprob}. It can be seen that for the system sum-rate, the
difference between our algorithm and the sum-rate-optimal solutions
is within $3.5\%$, while it is within $13\%$ between the
sum-rate-optimal and the even power solutions. The proposed solution
is about $5$ dB superior to the even power allocation. To quantify
the fairness of different allocation schemes, we use the average
value of the normalized difference:
$[\max_i(r_i)-\min_i(r_i)]/\max_i(r_i)$, where $r_i$ is the
achievable rate of User $i$. A smaller difference indicates a fairer
solution. We can see that our solution achieves similar fairness to
the even power solution and is fairer than the sum-rate-optimal one.
\begin{figure}
  \centering
  \includegraphics[width=0.6\textwidth]{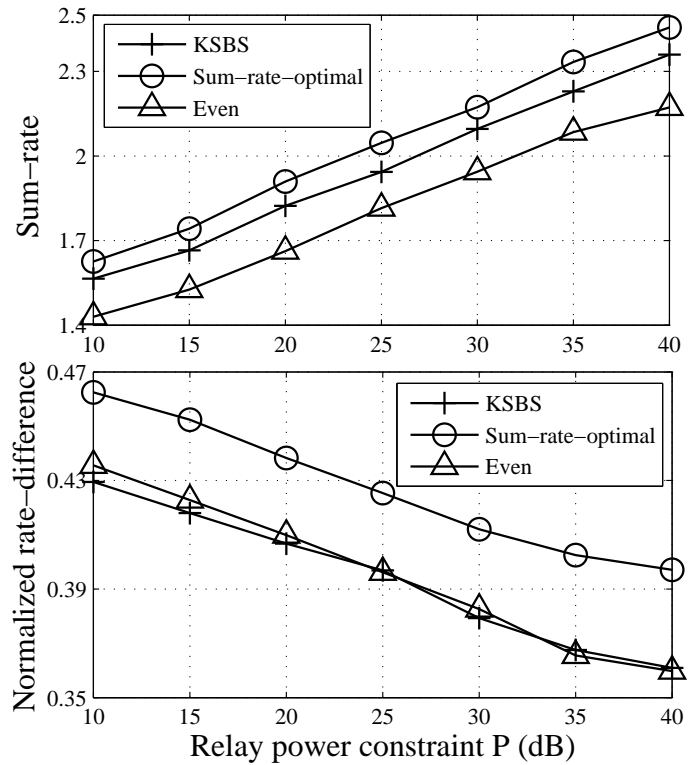}\\
  \caption{System sum-rate and fairness of a three-user relay network with Rayleigh-fading channels.}\label{simu:1b}
\end{figure}

\begin{figure}
  \centering
  \includegraphics[width=0.6\textwidth]{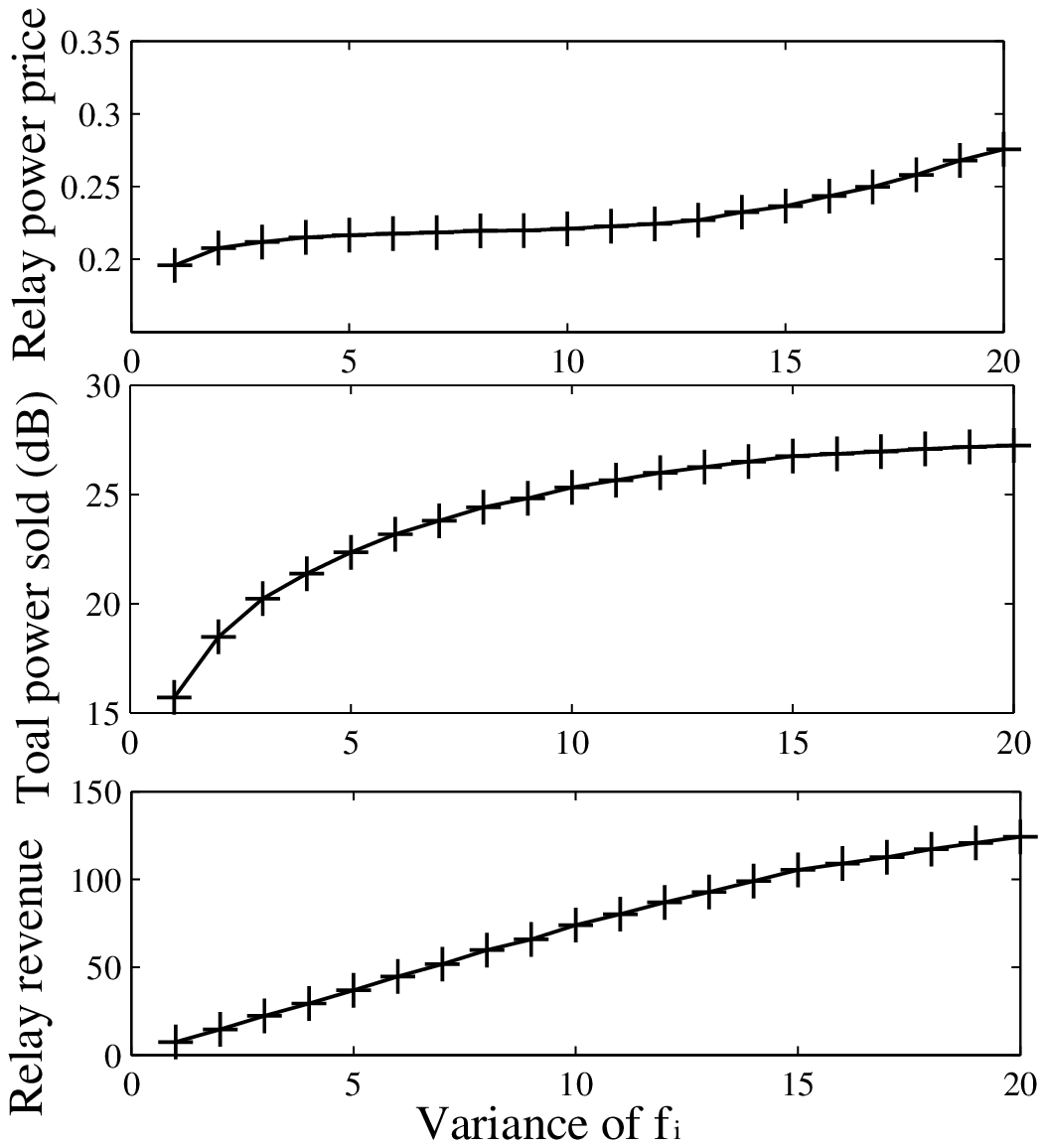}\\
  \label{fig-NWModel}
  \caption{Three-user relay network with Rayleigh fading channels and different variances of $f_i$.}
\label{simu:2a}
\end{figure}
Next, we examine the trend of the optimal relay price with an
increasing demand. From Lemma \ref{P-b}, $P^I_i(\lambda)$ is a
non-decreasing function of $|f_i|^2$. So, we can use an increasing
$|f_i|^2$ to simulate the increasing user demand. In this numerical
experiment, we again consider a three-user network and model all
channels as independent circularly symmetric complex Gaussian random
variables with zero-mean, that is, they are independent Rayleigh
flat-fading channels. The variances of all $g_i$'s and $h_i$'s are
1, while the variance  of all $f_i$'s ranges from 1 to 20. A larger
variance means a higher average value of $|f_i|^2$, which on average
means a higher power demand from the users. The transmit power of
the users is set to be $10$ dB and relay power is set to be $20$ dB.
Figure \ref{simu:2a} shows the optimal relay power price, the actual
relay power sold, and the maximum relay revenue with different
variances of $f_i$. We can see that as the variance of $f_i$
increases, the optimal relay price increases, more relay power is
sold, and the maximum relay revenue increases. This fits one of the
laws of supply and demand, which says, if the supply is unchanged
and demand increases, it leads to higher equilibrium price and
quantity.

\begin{figure}
  \centering
  \includegraphics[width=0.6\textwidth]{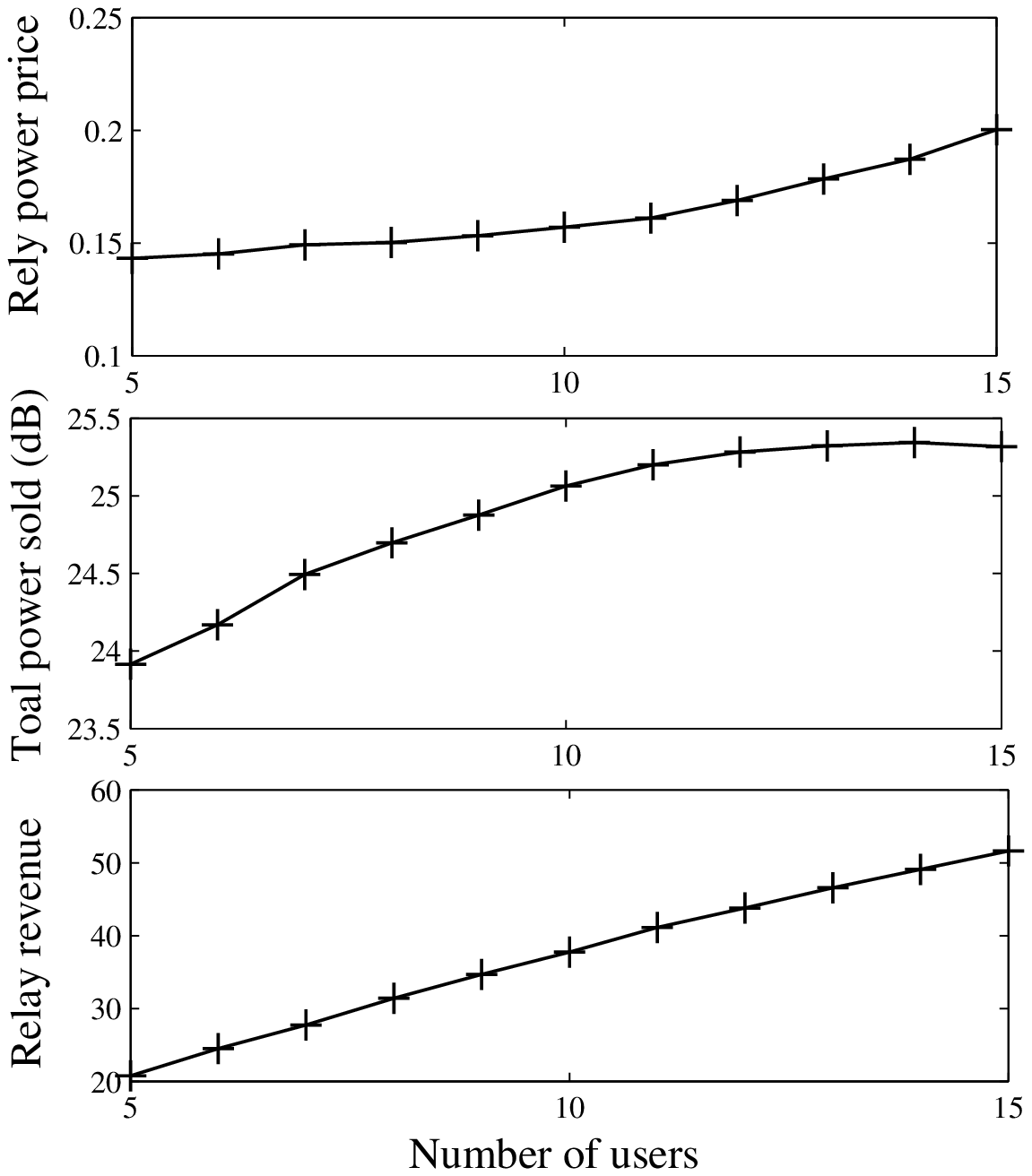}\\
  \label{fig-NWModel}
  \caption{Three-user relay network with Rayleigh fading channels and different numbers of users.}\label{simu:2b}
\end{figure}

In the third numerical experiment, we examine the relationship
between the optimal relay price and the number of users. We assume
that the relay power is fixed to be $20$ dB. The user power is fixed
as $10$ dB but the number of users vary from $5$ to $15$. All
channels are generated following the distribution
$\mathcal{CN}(0,1)$. Figure \ref{simu:2b} shows the optimal relay
power price, the total relay power sold, and the maximum relay revenue
with different numbers of users. We can see that as the number of
users increases, the optimal relay power price increases, the relay
power actually sold increases, and the maximum relay revenue
increases. Figure \ref{simu:2b} verifies the same law as Figure
\ref{simu:2a}, which says, if the supply is unchanged and demand
increases, it leads to higher equilibrium price and quantity.

\subsection{Static Network with Path-Loss Channel Only}
In this subsection, we study a static network whose channels are
deterministic instead of random. The network has three users, one
relay, and three destinations. The relative positions of the nodes
are shown in Figure \ref{simu:model}, where the coordinates of User
1-3, the relay, and Destination 1-3 are (-15, 3), (-10, 0), (-5,
-3), (0, 0), and (5, 3), (5, 0), (5, -3), respectively. We consider
the path-loss effect of wireless channels only by assuming that the
channel gains are inverse proportional to the distance squared. In
Figure 7, User 1 is the farthest from its destination and thus has
the worst channel; while User 3 is the closest to its destination
and has the best channel. The power of the users is set to be $10$
dB and the power of the relay is set to be $15$ dB.

\begin{figure}
  \centering
  \includegraphics[width=0.65\textwidth]{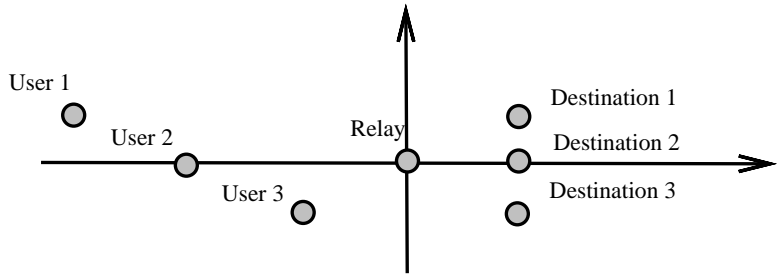}\\
  \label{fig-NWModel}
  \caption{A three-user static network.}\label{simu:model}
\end{figure}
\begin{figure}
  \centering
  \includegraphics[width=0.6\textwidth]{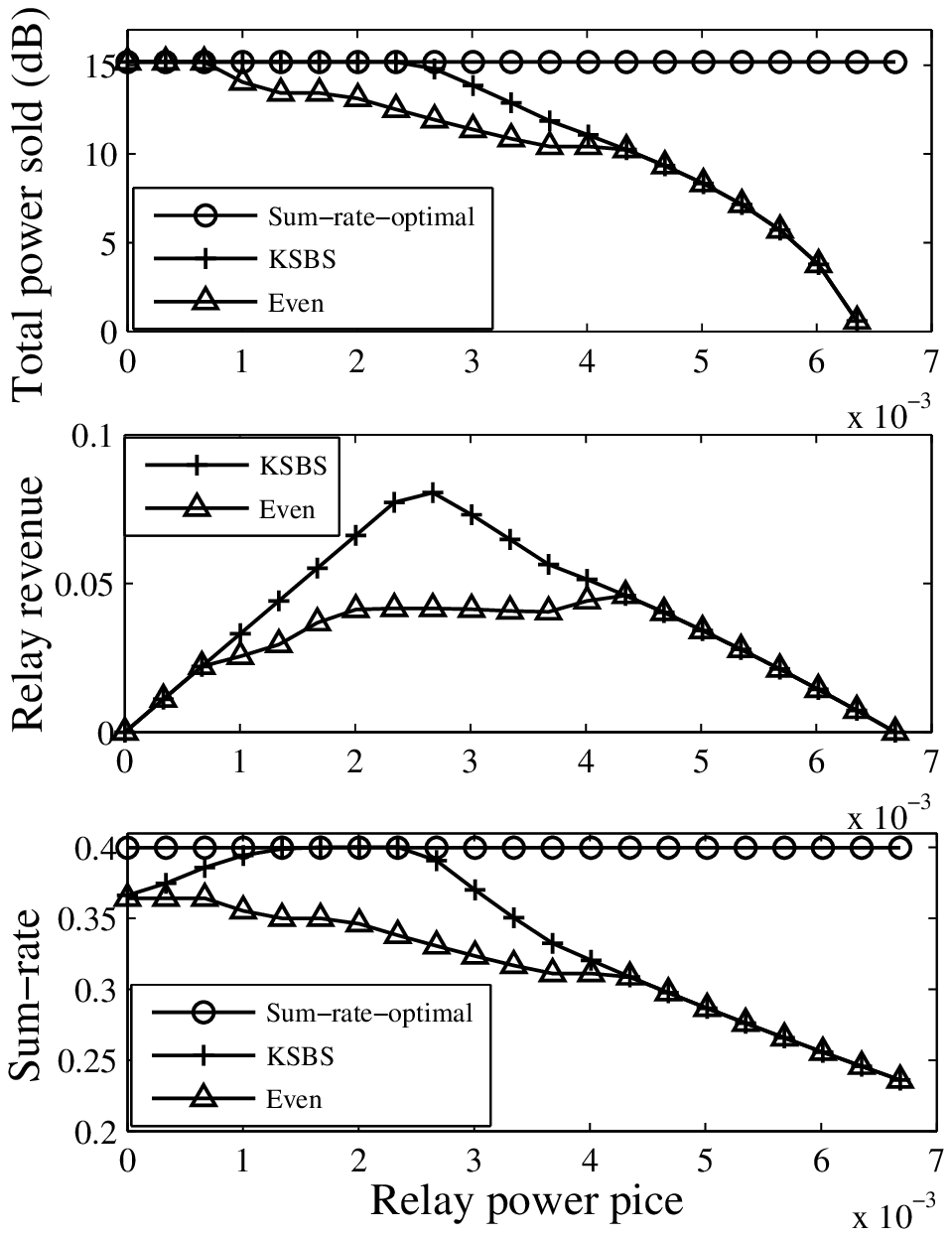}\\
  \caption{Power allocation, relay revenue, and sum-rate in three-user relay network with static channels.}\label{simu:3b}
\end{figure}

In Figure \ref{simu:3b}, the total power sold to the three users,
the relay revenue, and the network sum-rate are shown as the relay
power price varies. Three power allocation solutions are presented:
the proposed KSBS-based power allocation, the sum-rate-optimal power
allocation, and the even power allocation. Note that the
sum-rate-optimal allocation solution aims to maximize the network
sum-rate, is independent of the relay power price, and allocates all
the relay power $P$ to the three users. We can observe from Figure
\ref{simu:3b} that, when the price is higher, with the KSBS-based
and the even power allocation schemes, the users purchase less power
from the relay and the total power demand is smaller. For example,
using the KSBS-based allocation scheme, the total power demand is
less than $P$ when the price is higher than $0.0023$. Now let us
look at different price ranges separately. First, we can see that in
the price range $[0,0.0007]$, both KSBS-based and the even power
allocation schemes sell all relay power to the users. This is
because in this price range, $P_i^I(\lambda)\geq P/3$ for $i=1,2,3$,
thus with the even power allocation, each user will buy $P/3$, and
all relay power will be sold; for the KSBS-based power allocation,
$\sum^3_{i=1}P_i^I(\lambda)\geq P$, so all power of the relay will
be purchased by the users based on Lemma \ref{lem-sumpower}. Second,
when $\lambda \geq 0.0047$, the even power and the KSBS-based
schemes give the same power allocation results. This is because in
this price range, all three users' ideal power demands are no more
than $P/3$, that is, $P_i^I(\lambda) \leq P/3$ for $i=1,2,3$ and
$\sum^3_{i=1} P_i^I(\lambda)\leq P$. In this scenario, from Lemma
\ref{lem-sumpower}, both the even power allocation and the
KSBS-based schemes assign the ideal power demand $P_i^I(\lambda)$ to
User $i$, and the two schemes have the same performance. And when
$\lambda$ is in the range $[0.0007,0.0047]$, the KSBS-based power
allocation demands more relay power than the even power allocation,
and thus the relay receives a higher revenue in this range. This is
because with the even power allocation, a user cannot request more
than $1/3$ of the total relay power, while the KSBS-based scheme
does not have this constraint and thus enables users to request more
power. Furthermore, when $\lambda$ is $0.0027$ and the KSBS-based
scheme demands $91\%$ of the relay power to be sold to the users,
the relay revenue is maximized. At this relay power price, the
network sum-rate difference between the proposed KSBS-based solution
and the sum-rate-optimal power allocation is only about $2\%$. The
sum-rate difference between the even power and the sum-rate-optimal
schemes, however, is $23\%$ at the price $\lambda = 0.0047$, which
is the relay revenue maximizing price under the even power
allocation. For any relay price, the sum-rate difference between the
even power and the sum-rate-optimal schemes is no less than $9\%$.

\begin{table}
\caption{Achievable rates, normalized rate-difference, and the
system sum-rate in three user relay network with static
channels}\label{table:3b} \scriptsize{ \centering
\begin{tabular}{|c|c|c|c|c|c|c|c|c|c|c|c|c|}
 \hline
 \multicolumn{1}{|c|}{}&Sum-rate&\multicolumn{2}{|c|}{0}& \multicolumn{2}{|c|}{0.0013}& \multicolumn{2}{|c|}{0.0027}& \multicolumn{2}{|c|}{0.0047}& \multicolumn{2}{|c|}{0.0053}\\
\cline{3-12}
\multicolumn{1}{|c|}{}&-optimal&Even&KSBS&Even&KSBS&Even&KSBS&Even&KSBS&Even&KSBS\\ 
\hline
 $r_1$ & 0.0356&0.0498&0.0499  & 0.0356& 0.0356 &0.0356& 0.0356 &0.0356&0.0356  &0.0356& 0.0356
  \\ 
\hline $r_2$ &0.0838 &0.1017 &0.0994  &0.1017& 0.0991&0.0823& 0.0823
& 0.0627  &0.0627  &0.0627&0.0627\\ 
 \hline
  $r_3$&0.2802&0.2127 &0.2169    & 0.2127 & 0.2643 & 0.2127 & 0.2727
&0.1992&0.1992 &0.1777& 0.1777 \\
\hline \scriptsize{Rate-difference}&0.8729&0.7658  &0.7701 &0.8325
&0.8652&0.8325 &0.8694&0.8211&0.8211&0.7995 &0.7995
\\ 
\hline \scriptsize{Sum-rate}&0.3997&0.3641  &0.3662 &0.3500&0.3989
&0.3306 &0.3907&0.2975
&0.2975&0.2760 &0.2760\\
\hline
\end{tabular}
}
\end{table}

To further compare the performance of the three schemes, Table
\ref{table:3b} shows each user's achievable rate, the normalized
rate-difference, and the network sum-rate with the three power
allocation schemes at the relay power prices $0$, $0.0013$,
$0.0027$, $0.0047$ and $0.0053$. Recall that a smaller normalized
rate-difference represents a fairer power allocation. As can be seen
from Table \ref{table:3b}, the proposed KSBS-based scheme achieves a
smaller normalized rate difference than the sum-rate-optimal
solution for all relay prices, while the network sum-rate difference
between these two schemes is small. It shows that the proposed
solution is fairer than the sum-rate-optimal solution with
comparable network sum-rate. In sum, Figure \ref{simu:3b} and Table
\ref{table:3b} show that for the simulated network, the proposed
KSBS-based power allocation and relay pricing solutions achieve
close-to-optimal sum-rate, in the mean while, they also maximize the
relay revenue and achieve fairness among users.

To compare the sum-rates of the proposed solutions and the
sum-rate-optimal solution, we show in Figure \ref{simu:3e} the
network sum-rate of the proposed relay pricing and power allocation
solutions as the relay power constraint $P$ varies. We can see that
when the relay power constraint $P$ is small, indicating high demand
and low supply, the sum-rate of the proposed solution is almost the
same as the optimal sum-rate of the network. As the relay power
constraint $P$ increases, indicating low demand and high supply, the
sum-rate difference between the proposed solutions and the
sum-rate-optimal solution increases. When the relay power is $25$
dB, the difference is about $6\%$. The optimal relay price, on the
other hand, decreases as $P$ increases. These verifies the same law
of supply and demand with Figure \ref{simu:1a}, which says, if
supply increases and demand remains unchanged, then it leads to
lower equilibrium price.

\begin{figure}
  \centering
  \includegraphics[width=0.6\textwidth]{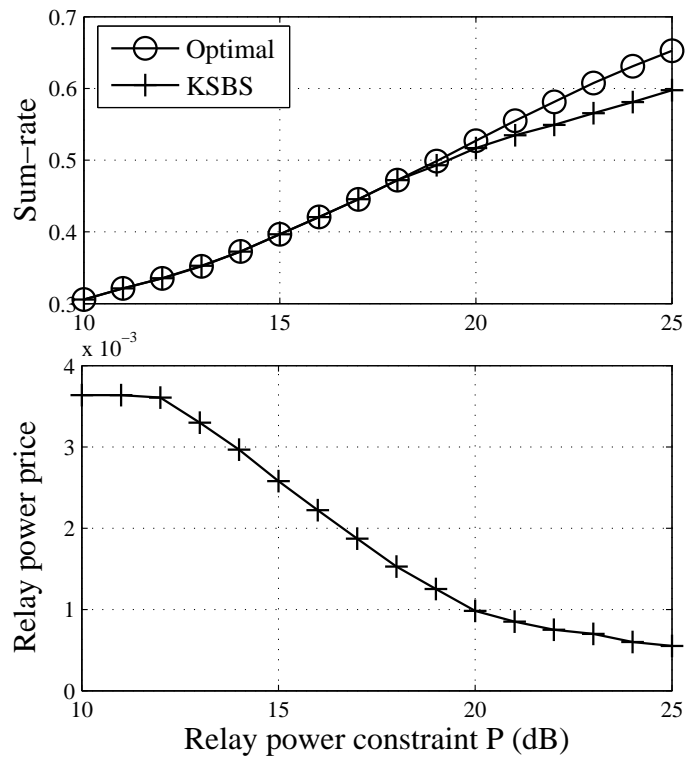}\\
  \caption{Sum-rate and relay power price in three user relay network with static channels.}\label{simu:3e}
\end{figure}
\section{Conclusion}\label{sec-conclusion}
In this paper, we study the relay power allocation problem in a
multi-user single-relay network. By introducing a relay power price,
we take into consideration the incentives for cooperation at the
relay in helping the users. The Stackelberg game is used to model
the interaction between the relay and the users, in which the relay
acts as the leader who sets the price of its power to gain the
maximum revenue and the users act as followers who pay for the relay
service. To model the competition among users, a bargaining game and
its KSBS are used for a fair power allocation. We analytically solve
the optimal relay price, while the problem of relay power allocation
among users is transformed into a convex optimization problem and
can be solved with efficient numerical methods. Simulation results
show that our solutions reflect the laws of supply and demand, give
better user utilities and relay revenue than even power allocation,
and approach the sum-rate-optimal power allocation in terms of
network sum-rate for a wide range of network scenarios.


%







\end{document}